\newtheorem{definition}{Definition}[section]
\newtheorem{lemma}{Lemma}[section]
\newtheorem{theorem}{Theorem}[section]
\newtheorem{problem}{Problem}[section]
\newcommand{\RTP}{\textrm{RTP}}
\newcommand{\IRTP}{\textrm{IRTP}}
\newcommand{\SRTP}{\textrm{SRTP}}
\newcommand{\SIRTP}{\textrm{SIRTP}}
\newcommand{\ALGSIRTP}{\textrm{ALGSIRTP}}
\newcommand{\lcm}{\textrm{lcm}}
\newcommand{\poly}{\textrm{poly}}
\begin{document}

\title{Rectangle Transformation Problem
\footnote{This work is supported in part by the 973 Program of China
Grants No. 2014CB340302, and in part by the 973 Program of China
Grants No. 2016YFB1000201.}}

\author{
\normalsize Shaojiang Wang$^{1,3,4}$, Kun He$^{2,3,5}$, Yicheng Pan \footnote{Corresponding author} $\ ^{1,4}$, Mingji Xia$^{1,4}$\\
\small 1. State Key Laboratory of Computer Science,\\ \small Institute of Software, Chinese Academy of Sciences\\ 
\small 2. CAS Key Lab of Network Data Science and Technology,\\ \small Institute of Computing Technology, Chinese Academy of Sciences\\ 
\small 3. University of Chinese Academy of Sciences\\ 
\small 4. Email: \{wangsj, yicheng, mingji\}@ios.ac.cn \\
\small 5. Email: hekun@ict.ac.cn }

\date{}

\maketitle

\begin{abstract}
In this paper, we propose the rectangle transformation problem (RTP)
and its variants. RTP asks for a transformation by a rectangle
partition between two rectangles of the same area. We are interested
in the minimum RTP which requires to minimize the partition size. We
mainly focus on the strict rectangle transformation problem (SRTP)
in which rotation is not allowed in transforming. We show that SRTP
has no finite solution if the ratio of the two parallel side lengths
of input rectangles is irrational. So we turn to its complement
denoted by SIRTP, in which case all side lengths can be assumed
integral. We give a polynomial time algorithm ALGSIRTP which gives a
solution at most $q/p+O(\sqrt{p})$ to $\SIRTP(p,q)$ ($q\geq p$),
where $p$ and $q$ are two integer side lengths of input rectangles
$p\times q$ and $q\times p$, and so ALGSIRTP is a
$O(\sqrt{p})$-approximation algorithm for minimum $\SIRTP(p,q)$. On
the other hand, we show that there is not constant solution to
$\SIRTP(p,q)$ for all integers $p$ and $q$ ($q>p$) even though the
ratio $q/p$ is within any constant range. We also raise a series of
open questions for the research along this line.
\end{abstract}

\section{Introduction}

We consider a practical problem in the Belt and Road initiative
hosted by China. Freight amongst different countries and areas, by
train, ship or plane etc, needs to transfer between transport
facilities. To move cubic boxes between containers, suppose that
each container is a cube, and the source container and target
container usually have different specifications, i.e., different
lengths, widths and heights. A problem is how to design a series of
standard boxes to move between transport facilities of different
specifications easily. For example, a practical challenge of
building the Mongolia, China and Russia economic corridor
infrastructure is their railway gauge differences, in which a key
issue is the container between standard gauge and broad gauge
conversion \cite{L2016}.

Suppose that two containers has the same volume, and we want to
design a series of boxes such that they fully fills in each
container perfectly. \footnote{Usually, two containers are not
exactly of the same volume. However, for some reasons such as
compaction and safety in transportation, the boxes is supposed to be
piled up in a cubic shape. This can be considered equivalent to
transform between two same volumed containers.} When moving from one
container to the other, the number of moving times is hopefully
minimized, which means that the number of boxes is minimized.

A variant of this problem is a simplification by letting each box
have the same height, or say in practice, no cover on the containers
which, however, have the same floor area. So this problem reduces
from $3$-dimensional to $2$-dimensional. Another variant is
prohibiting rotation of each box, which in practice means that no
rotation happens in moving boxes because of machinery constraints.
In this paper, we propose the rectangle transformation problem and
some of its variants to formulate the $2$-dimensional version of
this problem.

\subsection{Definitions and Problems} \label{sec:definition_problem}

We begin with defining rectangle partitions and isomorphic rectangle
partitions as follows.

\begin{definition}
(Rectangle partitions) A \emph{rectangle partition} $\mathcal{P}$ on
a rectangle $M$ is a partition on $M$ such that each module of
$\mathcal{P}$ is a rectangle.
\end{definition}

\begin{definition}
(Isomorphic rectangle partitions) Suppose that $M_1$ and $M_2$ are
two rectangles of the same area. We say that two rectangle
partitions $\mathcal{P}_1$ and $\mathcal{P}_2$ on $M_1$ and $M_2$,
respectively, are \emph{isomorphic} if $\mathcal{P}_1$ and
$\mathcal{P}_2$ (two sets of modules) are exactly the same. That is,
there is a one-one mapping between $\mathcal{P}_1$ and
$\mathcal{P}_2$ such that each pair of modules (smaller rectangles
from $\mathcal{P}_1$ and $\mathcal{P}_2$, respectively) related by
this mapping have the same length and width.
\end{definition}

In the above definition, rotation is allowed in identifying each
pair of modules in the one-one mapping. If rotation is not needed,
then we say that $\mathcal{P}_1$ and $\mathcal{P}_2$ are
\emph{strictly} isomorphic rectangle partitions. The rectangle
transformation problem can be formulated as follows.

\begin{problem}
(Rectangle transformation problem, RTP for short) Let $M_1$ and
$M_2$ be two rectangles $a\times b$ and $c\times d$
($a,b,c,d\in\mathbb{R}^+$), respectively. Suppose that $ab=cd$, that
is, $M_1$ and $M_2$ have the same area. The rectangle transformation
problem requires to find a pair of isomorphic rectangle partitions
$\mathcal{P}_1$ and $\mathcal{P}_2$ for $M_1$ and $M_2$,
respectively.
\end{problem}

The minimum RTP is the optimization problem such that the size (the
number of modules) of $\mathcal{P}_1$ (or of $\mathcal{P}_2$,
equivalently) is minimized. If $\mathcal{P}_1$ and $\mathcal{P}_2$
are required to be strictly isomorphic, then we call the RTP to be
\emph{strict} RTP (SRTP). Given input $a\times b$ and $c\times d$,
the (strict) rectangle transformation problem is essentially
requiring a transformation by a (strict) rectangle partition from
$M_1$ to $M_2$.

Suppose that real numbers $a\geq c\geq d\geq b>0$ and $ab=cd$.
\footnote{We use ``$\geq$" rather than ``$>$" for generality
although $a=c$ means a trivial case.} We formulate the minimum RTP with input
$a\times b$ and $c\times d$ as $\RTP(a,b,c,d)$. Similarly, ignoring
the size relationship between $c$ and $d$, $\SRTP(a,b,c,d)$ can be
defined if we clarify the parallel sides of the two rectangles,
without loss of generality, $a$ and $c$.

An interesting observation for $\SRTP(a,b,c,d)$ is that, when we
shrink a pair of parallel sides, for example, $a$ and $c$, for $d/a$
time, we get two rectangles of size $d\times b$ and $cd/a\times d$.
Since $cd/a=b$, we in fact get two identical rectangles which are
identified by $90^\circ$ rotation. A solution to this new pair of
rectangles implies a solution to the original pair of rectangles by
an $a/d$ time stretch on corresponding sides, and vice versa. Since
SRTP prohibits rotation, the new problem $\SRTP(d,b,b,d)$ is not
easy yet, but it is equivalent to $\SRTP(a,b,c,d)$. Thus, we can omit
two parameters and define $\SRTP(p,q)$ for real numbers $p$ and $q$
to be the SRTP which requires strictly isomorphic rectangle
partitions between rectangles $p\times q$ and $q\times p$.

The above is a general statement of (strict) RTP. In fact, for some
cases, there might be no isomorphic rectangle partitions of finite
size. For example, there are no finite strictly isomorphic rectangle
partitions for $M_1=1\times 2$ and $M_2=\sqrt{2}\times\sqrt{2}$.
Generally, we have the following theorem.

\begin{theorem} \label{thm:irrational_lower_bound}
If $p/q$ is irrational, then there is no solution of finite size to
$\SRTP(p,q)$.
\end{theorem}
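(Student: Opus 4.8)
The plan is to use Dehn-style dissection invariants built from additive functions on $\mathbb{R}$. The key tool is the following invariance lemma, which I would prove first: for any two additive maps $f,g\colon\mathbb{R}\to\mathbb{R}$ (maps with $f(x+y)=f(x)+f(y)$, equivalently $\mathbb{Q}$-linear maps, which exist in abundance via a Hamel basis of $\mathbb{R}$ over $\mathbb{Q}$) and any strict rectangle partition $\mathcal{P}$ of an axis-parallel rectangle of width $A$ and height $B$ into modules of dimensions $w_i\times h_i$, one has $\sum_i f(w_i)g(h_i)=f(A)g(B)$. I would establish this by extending every module edge to a full line across the big rectangle; since $\mathcal{P}$ is finite this yields a finite rectangular grid refining $\mathcal{P}$, in which every module is an exact union of grid cells and every cell lies in a unique module. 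On a cell of width $u$ and height $v$ the quantity is $f(u)g(v)$; summing the cells inside one module and using additivity of $f$ along the horizontal axis and of $g$ along the vertical axis recovers $f(w_i)g(h_i)$, while summing over all cells of the grid and telescoping the coordinate increments in each direction yields $f(A)g(B)$. Comparing the two summations gives the lemma.

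With the lemma in hand, I would argue by contradiction. Assume $\SRTP(p,q)$ has a finite solution, namely strictly isomorphic partitions $\mathcal{P}_1$ of the $p\times q$ rectangle and $\mathcal{P}_2$ of the $q\times p$ rectangle, matched by a bijection that preserves the width and the height of each module (no rotation). Writing $w_i\times h_i$ for the common list of module dimensions, the lemma applied to $\mathcal{P}_1$ gives $\sum_i f(w_i)g(h_i)=f(p)g(q)$ and applied to $\mathcal{P}_2$ gives $\sum_i f(w_i)g(h_i)=f(q)g(p)$. Hence $f(p)g(q)=f(q)g(p)$ for \emph{every} choice of additive $f$ and $g$.

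To finish, I would exploit the hypothesis that $p/q$ is irrational, i.e.\ that $p$ and $q$ are linearly independent over $\mathbb{Q}$: extend $\{p,q\}$ to a Hamel basis of $\mathbb{R}$ over $\mathbb{Q}$ and pick the additive maps $f$ with $f(p)=1,\ f(q)=0$ and $g$ with $g(p)=0,\ g(q)=1$ (both zero on the remaining basis vectors). Then $f(p)g(q)=1\neq 0=f(q)g(p)$, contradicting the previous paragraph; so no finite solution exists.

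I expect the only delicate point to be the invariance lemma — specifically, checking that extending all module edges really does produce a common grid refinement in which each module is a union of cells and each cell belongs to exactly one module (this is where finiteness of the partition and the partition property are used), together with keeping the additivity bookkeeping in the two directions straight. Establishing that the modules may be taken axis-parallel (which holds for any partition of a rectangle into rectangles) is a minor preliminary. Everything after the lemma is a short linear-algebra argument over $\mathbb{Q}$.
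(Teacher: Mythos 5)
Your proof is correct, and it takes a genuinely different route from the paper's. You use a classical Dehn-style bilinear invariant: extend all module edges to a full grid, and observe that for any $\mathbb{Q}$-linear $f,g$ the quantity $\sum_i f(w_i)g(h_i)$ telescopes over the grid to $f(A)g(B)$, so it depends only on the ambient rectangle and not on the partition. Applying this to both partitions of a strictly isomorphic pair forces $f(p)g(q)=f(q)g(p)$ for all additive $f,g$, and a choice of $f,g$ separating $p$ from $q$ (possible precisely because $p,q$ are $\mathbb{Q}$-linearly independent) gives the contradiction. The paper instead proceeds by two reduction lemmas -- first to a ``slat'' partition in which all vertical cuts run face-to-face, then to a solution whose width set is $\mathbb{Q}$-linearly independent via a fairly delicate mincing/bin-packing argument -- and then compares the total height of modules of a fixed width $x_1$ in the two partitions ($sp$ on one side because of the slat structure, $t_1q$ on the other because the independent widths make the horizontal cross-section decomposition of $p$ unique), obtaining $sp=t_1q$. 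Your argument is shorter, avoids both structural lemmas, and makes the role of $\mathbb{Q}$-linear independence transparent through the invariant; the only cost is that you invoke a Hamel basis of $\mathbb{R}$ over $\mathbb{Q}$, which uses the axiom of choice, though this is easily dispensed with by working only in the finite-dimensional $\mathbb{Q}$-subspace spanned by $p$, $q$, and the finitely many module side lengths. The paper's route, while more involved, stays entirely elementary and constructive, and its lemmas (slat reduction, width-set independence) may be of independent use elsewhere in the paper's program.
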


We will prove this theorem in Section
\ref{sec:irrational_lower_bound}. In the rest part of this paper, we
focus on the case that $a,b,c,d\in\mathbb{Q}^+$, which turns to be
equivalent to restricting $a,b,c,d\in\mathbb{Z}^+$ for the reason of
stretch technique, where $\mathbb{Z}^+$ denotes the set of positive
integers. In this case, we call RTP to be integral rectangle
transformation problem (IRTP) and SRTP to be strict integral
rectangle transformation problem (SIRTP). Both IRTP and SIRTP have
obviously a trivial solution of size $ab$ (or equivalently $cd$).
But to find an minimum solution is not easy yet. Similar to the
definitions of $\RTP(a,b,c,d)$ and $\SRTP(p,q)$, we also define
$\IRTP(a,b,c,d)$ and $\SIRTP(p,q)$. (Note that after transforming
$\SIRTP(a,b,c,d)$ by the stretch technique, $\SIRTP(d,b,b,d)$ has an
integer input also.) Because of the stretch technique, $\SIRTP(p,q)$
is equivalent to $\SRTP(p,q)$ when $p/q$ is rational, which is the
complemental case of Theorem \ref{thm:irrational_lower_bound}.
Moreover, we can always assume that $a,b,c,d$ are mutually co-prime
for $\IRTP(a,b,c,d)$ and $p,q$ are co-prime for $\SIRTP(p,q)$.

One thing we have to emphasize is that, for $\IRTP(a,b,c,d)$ or
$\SIRTP(p,q)$, the description of a rectangle partition of size at
most $ab$ or $pq$, respectively, might be of super-polynomial length
of the input size. However, the representation of its size which is
an integer at most $ab$ or $pq$ is of polynomial length. So from now
on, we always assume the output of RTP (or IRTP, SRTP, SIRTP) to be
the size of one of the isomorphic rectangle partitions.

\subsection{Relations to Other Partitioning Problems}

It has been known as Wallace-Bolyai-Gerwien theorem for centuries that a polygon can be dissected into any other polygon of the same area.
Precisely, it states that two polygons are equidecomposable in terms of finitely many triangles if and only if they have the same area.
However, for the dissections of a certain shape other than triangle, e.g. for our study, dissecting a rectangle into rectangles, the problem becomes quite different.
Sometimes the equidecomposability is easy for rectangle partitions, but how to find an optimal partition is completely not known, and also rarely studied.

There are several optimization problems about geometrical dissections having been considered.
The most famous one is triangulation, which requires a maximal
partition of the convex hull of a set of points in a plane into
triangles by using these points as triangle vertices. Many
optimization criteria have been studied, for example, optimizing the
minimum or maximum angle \cite{MS1988,ETW1990} and the minimum
weighted triangulation problem \cite{Llo1977,MR2006}.
The minimum-weight triangulation problem asks for a triangulation of a
given point set that minimizes the sum of the edge lengths, and it has been proved to be NP-hard \cite{MR2006}.
Generally, the deterministic version for the minimum number of pieces for polygon transformation,
which is known as $k$-piece dissection problem, has also been proved to be NP-hard \cite{BDDLMRY2015}.

For rectangles, the problem of minimizing the largest perimeter of
modules in rectangle partition of a certain size has been analyzed
\cite{KK1983,AK1992}. Another interesting result states that if a rectangle is partitioned by rectangles
each of which has at least an integer side, then the partitioned rectangle has at least an integer side \cite{W1988}.

However, as far as we know, no optimization
problem on transformations of a certain geometrical figure by basic modules other than triangles has been studied. There are also few algorithmic studies for dissection problems.
Our study opens this research for rectangles.

\subsection{Main Results and Techniques} \label{sec:result_technique}

A straightforward method for a non-trivial solution to
$\IRTP(a,b,c,d)$ ($a\geq c\geq d\geq b$) is using the Euclidean
algorithm (that is, the successive division method) which proceeds
by a greedy heuristic. At the beginning, align two adjacent sides of
both rectangles at a corner arbitrarily, for example, align $a$ with
$c$ and $b$ with $d$. \footnote{The other way to align adjacent
sides is to align $a$ with $d$ and $b$ with $c$ (as shown in Figure
\ref{fig:Euclidean}). Any rule can be raised here to determined
which way should be chosen. But the greedy strategy based on the
fact that $\lfloor a/c\rfloor \leq \lfloor a/d\rfloor$ may not be
always the best rule. For example, consider $\IRTP(15,2,6,5)$.} Then
cut $a$ into $\lfloor a/c\rfloor$ segments and accordingly cut $d$
into $\lfloor d/b\rfloor$ segments (note that $a/c=d/b$). So
$\lfloor a/c\rfloor$ many identical rectangles of size $c\times b$
in both rectangles are identified. The following task is to
transform the unidentified parts $\left(a-\lfloor a/c\rfloor\cdot
c\right)\times b$ and $c\times \left(d-\lfloor d/b\rfloor\cdot
b\right)$ of the same area by isomorphic rectangle partitions, which
is a subproblem of smaller scale. It proceeds recursively until
getting a subproblem of transforming two identical rectangles. The
procedure must end since the algorithm preserves integer side
lengths in each subproblem and the least unit is $1\times 1$ square
in the integer case. The total number of modules is the sum of the
integer parts of the ratios which can be calculated in $O(\log^3 a)$
time since the area of each subproblem halves in each round. So the
Euclidean algorithm halts in polynomial time. The Euclidean
algorithm works for $\SIRTP(p,q)$ also. The only difference is that,
since rotation is not allowed, when align two adjacent sides of both
rectangles, two vertical sides and two horizontal sides are
certainly aligned and successively divided, respectively, and so all
the modules are squares.

\begin{figure}[htbp]
\centering
\includegraphics[scale=.5]{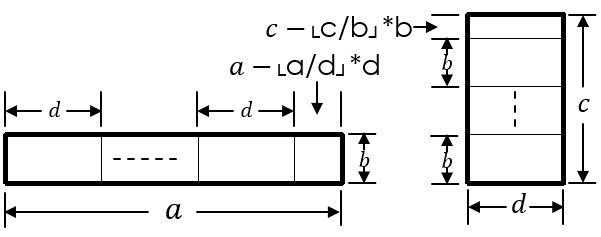}
\caption{Illustration of one step of the Euclidean algorithm.}
\label{fig:Euclidean}
\end{figure}

Note that a trivial lower bound for $\IRTP(a,b,c,d)$ ($a\geq c\geq
d\geq b$) is $\lceil a/c\rceil$ since any rectangle with one side of
length larger than $c$ cannot be contained in the one $c\times d$.
For the same reason, $\lceil q/p\rceil$ is a trivial lower bound for
$\SIRTP(p,q)$ if $q\geq p$. Thus, the Euclidean algorithm with the
greedy strategy in the rule to determine the way of side alignments
gives the optimal solution $a/c$ for $\IRTP(a,b,c,d)$ if $c|a$, and
the optimal solution $q/p$ for $\SIRTP(p,q)$ if $p|q$. For the
general case, we do not yet know whether the Euclidean algorithm
(with a certain rule in determine the alignment of sides in each
round) is a good algorithm for $\IRTP(a,b,c,d)$, but the following
example indicates that it works badly for $\SIRTP(p,q)$. Consider
$\SIRTP(p,p+1)$. In the first step, the Euclidean algorithm divides
the rectangle of size $p\times (p+1)$ (resp. $(p+1)\times p$) into a
square of size $p\times p$ and a slim rectangle of size $p\times 1$
(resp. $1\times p$). But the only way in the subproblem of
transforming $p\times 1$ to $1\times p$ is cutting both of them into
$p$ many $1\times 1$ unit squares since $p$ is a trivial lower bound
for this subproblem. Thus the algorithm gives the solution $p+1$ to
$\SIRTP(p,p+1)$.

In this paper, we focus on SIRTP and establish the algorithm
ALGSIRTP (shown in Section \ref{sec:the_hybrid_algorithm}) which
gives a solution at most $q/p+O(\sqrt{p})$ to $\SIRTP(p,q)$ ($q\geq
p$).
\begin{theorem} \label{thm:algorithm}
Suppose that $p,q\in\mathbb{Z}^+$ and $q\geq p$. ALGSIRTP is a
polynomial time algorithm for $\SIRTP(p,q)$ and gives a solution at
most $q/p+O(\sqrt{p})$.
\end{theorem}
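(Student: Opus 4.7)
The plan is to design $\ALGSIRTP$ in two phases, bound the module count of each, and sum. Write $q = kp + r$ with $0 \le r < p$. Phase~1 (coarse reduction): if $k \ge 2$, peel $k - 1$ congruent $p\times p$ squares, stacked along the long side of $p \times q$ and of $q \times p$; this accounts for $k-1$ modules and reduces the instance to $\SIRTP(p,\, p+r)$ on the residual rectangles $p\times(p+r)$ and $(p+r)\times p$, a near-square instance since $p+r < 2p$. This phase is just integer division, costing $\poly(\log q)$ time.

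Phase~2 is the technical heart: handle a near-square instance $\SIRTP(p,q')$ with $p \le q' < 2p$ using only $O(\sqrt{p})$ modules. My plan is to fix a parameter $s = \Theta(\sqrt{p})$ and exhibit an explicit strictly isomorphic partition with $O(s)$ pieces. The construction I have in mind consists of two families: roughly $s$ ``bulk'' rectangles whose shorter side has length close to $p/s \approx \sqrt{p}$, arranged so that a row layout in $p \times q'$ is matched by a compatible layout in $q' \times p$; and $O(s)$ smaller ``boundary'' rectangles that absorb the two residuals $p - s\lfloor p/s\rfloor$ (of size less than $s$) and $q' - p$ (of size less than $p$, hence less than $s^2$). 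Summing the phases yields $(k-1) + O(\sqrt{p}) \le q/p + O(\sqrt{p})$ modules. Since all cut coordinates are computed from $p$ and $q$ by $\poly(\log p, \log q)$ integer operations, and since $\ALGSIRTP$ only has to output the total count---a $\poly(\log p, \log q)$-bit integer, even though the partition description itself can be much longer---the overall algorithm runs in polynomial time.

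The main obstacle is the explicit construction and its correctness in Phase~2. Without rotation, the very same oriented $w\times h$ rectangle must play a ``tall'' role in $p\times q'$ (whose smaller side is the horizontal one) and a ``wide'' role in $q'\times p$ (whose smaller side is the vertical one); one cannot simply transpose the partition. Instead, the cut pattern must be chosen so that its horizontal and vertical cross-sections exhibit exactly the same multiset of piece dimensions, and one must verify that the boundary between the bulk and the residual does not spawn $\Theta(p)$ extra unit-sized leftovers---otherwise one collapses back to the bad behaviour of the Euclidean algorithm on instances like $\SIRTP(p,p+1)$. Proving the exact match between the two tilings, together with the tight count of $O(\sqrt{p})$ boundary pieces, is where I expect the combinatorial work to lie.
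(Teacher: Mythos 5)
Your Phase~1 (peel off $\Theta(q/p)$ congruent $p\times p$ squares) coincides with the Euclidean step in the paper's $\ALGSIRTP$, and your instinct to then use a square-like transfer region to match the two orientations is exactly the paper's ``square-transfer heuristic.'' The gap is in Phase~2, and you flag it yourself: you claim a \emph{single-shot} partition of a near-square instance $\SIRTP(p,q')$ with $p\le q'<2p$ into $O(\sqrt p)$ pieces, but you never produce it, and I do not see how the ``$O(s)$ boundary rectangles'' plan closes. After you place $\Theta(s)$ bulk pieces of width $\approx p/s$, the uncovered region in $p\times q'$ and the uncovered region in $q'\times p$ generically have \emph{different} shapes, and you need the remaining pieces to tile both simultaneously with the same multiset of oriented rectangles. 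A single $s\times s$ transfer square reconciles the two orientations of the $p\times(q'-p)$ strip (that is the heart of the paper's idea), but it still leaves a strip remainder $\Delta\times\bigl(p-\lfloor\sqrt{p/\Delta-1}\rfloor^2\Delta\bigr)$ versus its transpose with $\Delta=q'-p$, which is a genuine sub-instance of $\SIRTP$, not something that a bounded number of ad hoc boundary pieces can absorb. Moreover, when $\Delta\ge p/4$ the square-transfer idea is the wrong tool (the transfer square would be too small relative to $\Delta$), and a different move is needed.

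The paper resolves exactly these two difficulties by recursion rather than a one-shot tiling: when $\Delta\ge p/4$ it does an extra Euclidean-style peel of $\lfloor p/\Delta\rfloor-1$ squares, and when $\Delta< p/4$ it applies the square transfer, and in both cases it recurses on the leftover sub-instance $\SIRTP(\Delta,\cdot)$. It then proves (Lemma~\ref{lem:recursive_upper_bound}) by induction on the recursion depth $\ell=O(\log p)$ that the total piece count is $\lfloor q/p\rfloor+8\sqrt p+10\log_2 p$; crucially the $\sqrt{p_i}$ contributions from successive levels decay geometrically because the shorter side at least halves every two rounds, so they \emph{sum} to $O(\sqrt p)$ rather than each level independently costing $O(\sqrt p)$. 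Your proposal captures the right ingredients but replaces this recursion-plus-induction with an asserted explicit construction whose existence and $O(\sqrt p)$ count are exactly the unproven content; as written it does not establish the theorem. If you want to pursue the one-shot route, you would at minimum need to (i) give the construction for both the $\Delta\ge p/4$ and $\Delta<p/4$ regimes, (ii) verify that the two tilings realize the same oriented multiset of pieces, and (iii) bound the number of boundary pieces uniformly in $r=q'-p$, none of which is currently present.
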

Since $\lceil q/p\rceil$ is a lower bound, if $q=\Omega(p^{3/2})$,
then ALGSIRTP gives a solution to minimum $\SIRTP(p,q)$ with a
constant approximation ratio. In general, it is a
$O(\sqrt{p})$-approximation algorithm.

The main techniques used in ALGSIRTP are a square-transfer heuristic
and its combination with Euclidean. The square-transfer technique
means to build a square to transfer for two parts that are hard to
transform directly. For instance, consider $\SIRTP(p,p+1)$. After
the first step of the Euclidean algorithm, two slim rectangles
$p\times 1$ and $1\times p$ are left after cutting off a $p\times p$
square as a common module. Transforming these two rectangles
directly is costly, but if they can fill in a common part in the
$p\times p$ square simultaneously by partitioning into few modules,
then they can be transferred in this part easily. For the case that
$\sqrt{p}$ is an integer, rectangles $p\times 1$ and $1\times p$ can
be transferred by a $\sqrt{p}\times\sqrt{p}$ square. This heuristic
is illustrated in Section \ref{sec:square_transfer}. The algorithm
in Theorem \ref{thm:algorithm} is a recursive hybrid algorithm of
Euclidean and square-transfer. It is given in Section
\ref{sec:the_hybrid_algorithm}.

Generally, the lower bound $\lceil q/p\rceil$ becomes small if $p$
and $q$ are close. An extreme case is minimum $\SIRTP(p,p+1)$ for
which the Euclidean algorithm gives a $\Omega(p)$-approximation
solution compared with the trivial lower bound $\lceil
(p+1)/p\rceil=2$. For more general lower bound, we have the
following theorem.
\begin{theorem} \label{thm:lower_bound}
For any constants $\varepsilon>0$ and $m\in\mathbb{Z}^+$, there are
positive integers $p$ and $q$ satisfying $p<q<(1+\varepsilon)p$ such
that the minimum solution to $\SIRTP(p,q)$ is more than $m$.
\end{theorem}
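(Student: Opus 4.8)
The plan is to show that a constant bound $m$ on $\SIRTP(p,q)$ forces strong arithmetic restrictions on the pair $(p,q)$, restrictions that cannot be met when $q$ is squeezed into the interval $(p,(1+\varepsilon)p)$. First I would argue that in any strictly isomorphic partition of $p\times q$ and $q\times p$ with $k$ modules, every module $r\times s$ that appears has both $r$ and $s$ coming from a very restricted set of rationals. Concretely, the $x$-coordinates of the vertical cut lines in the $p\times q$ rectangle form a finite set of reals in $[0,p]$; together with the analogous coordinates on the $q\times p$ side, and the requirement that widths match under the isomorphism while total widths are $p$ on one side and $q$ on the other, one gets a linear system over $\mathbb{Q}$ with few unknowns. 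Since $p$ and $q$ are integers, I expect to be able to conclude that with only $k$ modules the denominators of all module side lengths are bounded by a function $D(k)$ of $k$ alone, independent of $p$ and $q$. Equivalently, after scaling by $D(k)!$ or a similar universal quantity, one may assume all module dimensions are integers bounded in terms of $k$.

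Next I would exploit an area/counting argument. If all $k$ modules have side lengths that are integers (after a universal rescaling) and the larger dimension of the ambient rectangle, namely $q$ (rescaled), must be covered, then some module must have a side of length $\Omega(q/k)$ or there must be a chain of $\Omega(q/(\text{max side}))$ modules stacked along the long direction; either way a module side must be at least on the order of $q/k$. But the same module, rotated-forbidden, must also sit inside the $q\times p$ rectangle, whose short side is $p$. Pushing this through, a bounded-size strict partition relating $p\times q$ and $q\times p$ should only be possible when $q/p$ itself has a denominator bounded in terms of $k$ — more precisely, when $p$ divides a bounded multiple of $q$, or $q/p$ lies in a finite set of rationals determined by $k$. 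I would make this precise by tracking, as in the Euclidean discussion in Section 1.4, how the long side $q$ must be decomposed into pieces each compatible with being placed in a rectangle of short side $p$.

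The conclusion then follows by a pigeonhole/finiteness argument: fix $\varepsilon$ and $m$; the set of rationals $q/p$ achievable by some $\SIRTP$ of size $\le m$ is finite (by the previous paragraph), hence its elements other than values $\le 1+\varepsilon$ are bounded away from $1$, and in particular only finitely many — possibly zero — of them lie in $(1,1+\varepsilon)$; but for each such rational $r=q_0/p_0$ in lowest terms, taking $p=N p_0,\ q=N q_0$ for large $N$ forces (again by the denominator-bound argument applied with the co-primality reduction noted in Section 1.1) the minimum solution to grow with $N$, contradicting the uniform bound $m$. Choosing $N$ large enough yields a pair $(p,q)$ with $p<q<(1+\varepsilon)p$ and $\SIRTP(p,q)>m$.

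The hard part will be the first step: proving that a strictly isomorphic partition with only $k$ modules forces all module side lengths to have denominators bounded by a function of $k$ alone. The subtlety is that the cut structure of a rectangle partition need not be a clean grid — it can have "brick-wall" offsets and T-junctions — so the linear system governing the coordinates is not immediately small, and one must argue that the number of independent coordinate variables is controlled by $k$ rather than being able to proliferate. I expect to handle this by a careful combinatorial analysis of the adjacency/incidence graph of the $k$ modules (each module contributes boundedly many cut-line segments, so the arrangement has $O(k)$ lines and the vertices of the arrangement solve an $O(k)\times O(k)$ rational linear system with integer data of size tied to $p$ and $q$), from which Cramer's rule gives the denominator bound. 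Once that structural lemma is in place, the area-counting and finiteness steps are comparatively routine.
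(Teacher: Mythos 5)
Your high-level plan — show that only finitely many ratios $q/p$ admit a size-$\le m$ strict isomorphic partition, then note that infinitely many distinct ratios $q/p$ lie in $(1,1+\varepsilon)$ — is the right shape, but it differs from the paper's execution and contains a genuine error in the last step and a serious gap in the middle.

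The paper reaches finiteness through a cleaner route. It defines the \emph{pattern} of a partition: extend all cut lines face-to-face to form a grid of at most $k^2$ cells, and record for each of the $k$ modules the $0$-$1$ matrix of cells it occupies. The key observation (Lemma~\ref{lem:same_ratio}) is that if two pairs of isomorphic partitions have identical patterns (with the same module-matching), then their ratios $p/q$ and $p'/q'$ are equal. The proof is a short algebraic swap: keeping the row heights from one instance while substituting the column widths from the other yields a valid partition of $p'\times q$ with module widths $l'_i$, giving $p'q=\sum l'_i h_i$; the symmetric substitution on the second rectangle gives $q'p=\sum l'_i h_i$; hence $p'q=q'p$. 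Since there are only finitely many patterns of bounded size, only finitely many ratios are achievable, and the theorem follows. This avoids any analysis of denominators entirely and, crucially, handles the fact that the solution space for a fixed combinatorial structure can be a continuum: the ratio is rigid even though the individual side lengths are not.

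Your proposal tries instead to bound the denominators of all module side lengths by $D(k)$. Even granting this (which is plausible via Cramer's rule on the incidence system, but must handle underdetermined systems and the choice of a rational representative carefully), the passage from "module sides have bounded denominator" to "$q/p$ has bounded denominator" is left as a hand-wave. The area/pigeonhole paragraph asserts that some module side is $\Omega(q/k)$ and must also fit in the other rectangle, but with $q\le(1+\varepsilon)p$ this constraint is vacuous and does not by itself bound the denominator of $q/p$. That step needs an actual argument, and it is exactly where the paper's pattern lemma does the work.

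Finally, the closing scaling argument is wrong and fortunately unneeded: $\SIRTP$ is invariant under uniform scaling, so $\SIRTP(Np_0, Nq_0)$ and $\SIRTP(p_0,q_0)$ have the \emph{same} minimum solution (and the paper's co-primality reduction makes this explicit). The minimum does not grow with $N$. You should delete that step; once you know the set of ratios achievable with $\le m$ pieces is finite, you are done simply by choosing one of the infinitely many ratios in $(1,1+\varepsilon)$ that lies outside this finite set.
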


Theorem \ref{thm:lower_bound} means that there is not constant
solution independent of $p$ and $q$ to general $\SIRTP(p,q)$. The
main technique used in its proof is to define the pattern of a
rectangle partition. Two pairs of isomorphic rectangle partitions
are equivalent if their two corresponding pairs of rectangle
partitions have the same pattern under a fixed one-one map,
respectively. Then we show a surprising lemma that all the
equivalent pairs can only deal with $\SIRTP(p,q)$ for a fixed value
of $p/q$. Constant size of a partition implies a constant number of
patterns and also a constant number of pattern pairs, which can only
deal with a constant number of values of $p/q$ in the range
$p<q<(1+\varepsilon)p$ for any $\varepsilon>0$. Then Theorem
\ref{thm:lower_bound} follows immediately. We prove it formally in
Section \ref{sec:constant_lower_bound}.

\section{The SIRTP Algorithm}
\label{sec:the_algorithm}

In this section, we give an algorithm which gives a solution at most
$\lfloor q/p\rfloor+8\sqrt{p}+10\log_2 p$ to $\SIRTP(p,q)$ for
$q\geq p$. At the beginning, we illustrate the square-transfer
heuristic by the instance $\SIRTP(p,p+1)$.

\subsection{The Square-Transfer Heuristic}
\label{sec:square_transfer}

Given rectangles $p\times(p+1)$ and $(p+1)\times p$ for which
$\sqrt{p}$ is an integer, we show that there is a pair of isomorphic
partitions of size $2\sqrt{p}+2$. First, we partition these two
rectangles, respectively, into two modules, one of which is a
$p\times p$ square. Then to transform for the left parts $p\times 1$
and $1\times p$, we cut off a square $\sqrt{p}\times\sqrt{p}$ in the
left bottom corner of each $p\times p$ square in both original
rectangles. Then the rectangle $p\times(p+1)$, cut the slim
rectangle $p\times 1$ into $\sqrt{p}$ segments, each of which is
$\sqrt{p}\times 1$, and meanwhile, cut the $\sqrt{p}\times\sqrt{p}$
square into $\sqrt{p}$ many $1\times \sqrt{p}$ rectangles (as
illustrated in Figure \ref{fig:square_transfer_a}). Symmetrically,
for the rectangle $(p+1)\times p$, cut the slim rectangle $1\times
p$ into $\sqrt{p}$ segments, each of which is $1\times\sqrt{p}$, and
meanwhile, cut the $\sqrt{p}\times\sqrt{p}$ square into $\sqrt{p}$
many $\sqrt{p}\times 1$ rectangles (as illustrated in Figure
\ref{fig:square_transfer_b}). Then cutting the rest part of the
module $p\times p$ into two rectangles leads to a pair of isomorphic
rectangle partitions of size $2\sqrt{p}+2$. The key point for which
this can be done is that the two slim rectangles can be cut off into
few parts to fill in a common area (the $\sqrt{p}\times\sqrt{p}$
square) in a large module.

\begin{figure}[htbp]
\centering \subfigure[Rectangle partition on $p\times(p+1)$]{
\label{fig:square_transfer_a}
\includegraphics[width=2.0in]{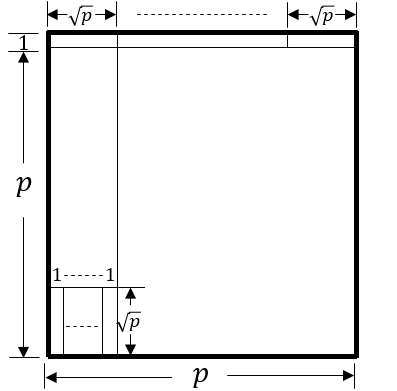}}
\hspace{0.75in} \subfigure[Rectangle partition on $(p+1)\times p$]{
\label{fig:square_transfer_b}
\includegraphics[width=2.0in]{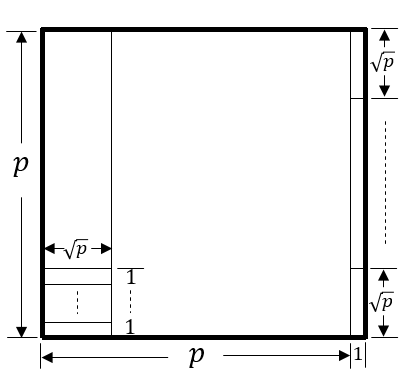}}
\caption{Illustration of the square-transfer heuristic for
$\SIRTP(p,p+1)$ when $\sqrt{p}$ is an integer.}
\label{fig:square_transfer}
\end{figure}

If $\sqrt{p}$ is not an integer, then we first cut each slim
rectangles into $\lfloor\sqrt{p}\rfloor+1$ segments, in which each
of the first $\lfloor\sqrt{p}\rfloor$ segments has length
$\lfloor\sqrt{p}\rfloor$ and width $1$. Then they can be transferred
by a $\lfloor\sqrt{p}\rfloor\times\lfloor\sqrt{p}\rfloor$ square in
the $p\times p$ square. The rest segment of each slim rectangles has
length $p-\lfloor\sqrt{p}\rfloor^2$ and width $1$. Since
$p-\lfloor\sqrt{p}\rfloor^2\leq p-(\sqrt{p}-1)^2=2\sqrt{p}-1$, these
two segments can be transformed directly by at most $2\sqrt{p}-1$
unit squares, which leads to a pair of isomorphic rectangle
partitions of size at most $4\sqrt{p}+1$.

\subsection{The Hybrid Algorithm of Euclidean and Square-Transfer}
\label{sec:the_hybrid_algorithm}

We turn to the general case of $\SIRTP(p,q)$. The idea is to use the
Euclidean step or the square-transfer step recursively for different
cases of subproblems. Note that the idea of square-transfer can be
easily generalized to $\SIRTP(p,p+\Delta)$ if $\Delta$ is far less
than $p$. The method is to choose a proper length of segments for
the less slim rectangles $p\times\Delta$ and $\Delta\times p$ such
that all of them make up of a square. A key observation is that
transforming the left parts of two rectangles is exactly a
subproblem of much smaller area. This is also true for each
Euclidean step in the Euclidean algorithm. So if $\Delta$ is close
to $p$, we use the Euclidean step, and otherwise, we use the
square-transfer step. This recursion process terminates within
$2\lceil\log_2 p\rceil+1$ steps if we guarantee the length of the
shorter side in each subproblem halves in every two rounds.

Next, we state the algorithm firstly, and then show its correctness.

\begin{center}
\begin{minipage}[h]{\textwidth}
\fbox{
\parbox{\textwidth}{
Algorithm $\ALGSIRTP(p,q)$\ \ ($p,q\in\mathbb{Z}^+$)
\begin{enumerate}
\item[] if $p>q$ then $p\leftrightarrow q$
\item[] if $p|q$ then return $\frac{q}{p}$
\item[] else $\Delta\leftarrow(q\mod p)$
\item[] \quad if $\Delta\geq\frac{p}{4}$
\item[] \quad then return
$\left(\left\lfloor\frac{q}{p}\right\rfloor+\left\lfloor\frac{p}{\Delta}\right\rfloor-1
+\ALGSIRTP\left(\Delta,p-\left(\left\lfloor\frac{p}{\Delta}\right\rfloor-1\right)\cdot\Delta\right)\right)$
\item[] \quad else return
\item[] \quad
$\left(\left\lfloor\frac{q}{p}\right\rfloor+2\cdot\left\lfloor\sqrt{\frac{p}{\Delta}-1}\right\rfloor+1
+\ALGSIRTP\left(\Delta,p-\left\lfloor\sqrt{\frac{p}{\Delta}-1}\right\rfloor^2\cdot\Delta\right)\right)$
\item[] \quad endif
\item[] endif
\end{enumerate}
} } \end{minipage}
\end{center}

Note that in each round of recursions, ALGSIRTP calls itself as
subroutine with smaller integer input. So it must terminate after
finite steps. Then we prove Theorem \ref{thm:algorithm}.

\begin{proof}

First, we show that ALGSIRTP is a polynomial time algorithm. Note
that the input length is $\lceil\log_2 p\rceil+\lceil\log_2
q\rceil$. Since division and mod can be calculated in log-square
time, we only have to show that the number of recursions, denoted by
$\ell$, is at most $O(\log p)$, which implies a time complexity
$O(\log q\log p+\log^3 p)$.

We show that the length of the shorter side becomes no more than
half in every successive recursions. In the step stated in
$\ALGSIRTP(p,q)$, note that $\Delta>0$,
$$p-\left(\left\lfloor\frac{p}{\Delta}\right\rfloor-1\right)\cdot\Delta\geq
p-\left(\frac{p}{\Delta}-1\right)\cdot\Delta=\Delta$$ and
$$p-\left\lfloor\sqrt{\frac{p}{\Delta}-1}\right\rfloor^2\cdot\Delta\geq
p-\left(\frac{p}{\Delta}-1\right)\cdot\Delta=\Delta.$$ The length of
the shorter side in each subroutine is $\Delta$. If $\Delta\leq
p/2$, then the length of the shorter side in each subroutine halves.
Otherwise, $\Delta>p/2$ in this round, and thus the first subroutine
$\ALGSIRTP(\Delta,p)$ will be executed in the next round. Now the
value $\Delta':=(p\mod\Delta)=p-\Delta<p/2$. So in the next round,
the length of the shorter side becomes $\Delta'$, which is at most
half of $p$. Therefore, $\ell\leq 2\lceil\log_2 p\rceil+1$.

Next, we show the correctness of the algorithm. Let $p_i$, $q_i$ and
$\Delta_i$ ($0\leq i\leq\ell$) be the parameters in the $i$-th
recursion, where $p_i\leq q_i$, $\Delta_i=(q_i\mod p_i)$, $p_0=p$
and $q_0=q$. We will show that for each $0\leq i<\ell$, the problem
$\SIRTP(p_i,q_i)$ can be reduced to $\SIRTP(p_{i+1},q_{i+1})$ by
adding $\lfloor q_i/p_i\rfloor+\lfloor p_i/\Delta_i\rfloor-1$ many
rectangles if $\Delta_i\geq p_i/4$, and $\lfloor
q_i/p_i\rfloor+2\cdot\lfloor\sqrt{p_i/\Delta_i-1}\rfloor+1$ many
rectangles, otherwise.

For each $0\leq i<\ell$, in the $i$-th round, we cut off $\lfloor
q_i/p_i\rfloor$ many rectangles from both rectangles $p_i\times q_i$
and $q_i\times p_i$ just as what we do in a Euclidean step. When
$\Delta_i\geq p_i/4$, we cut off $\lfloor p_i/\Delta_i\rfloor-1$
more squares $\Delta_i\times\Delta_i$ from both rectangles
$p_i\times \Delta_i$ and $\Delta_i\times p_i$ which are left in the
two rectangles $p_i\times q_i$ and $q_i\times p_i$, respectively.
Then this induces the subproblem $\SIRTP(p_{i+1},q_{i+1})$ where
$p_{i+1}=\Delta_i$ and $q_{i+1}=p_i-(\lfloor
p_i/\Delta_i\rfloor-1)\cdot\Delta_i$, while $\lfloor
q_i/p_i\rfloor+\lfloor p_i/\Delta_i\rfloor-1$ many rectangles are
added. This way of partition is illustrated in Figure
\ref{fig:ALGSIRTP_Delta_large}, where $\lfloor q_i/p_i\rfloor$ is
simplified to be $1$ and the shaded areas shows the subproblem
$\SIRTP(p_{i+1},q_{i+1})$.

\begin{figure}[htbp]
\centering \subfigure[Rectangle partition on $p_i\times q_i$]{
\label{fig:ALGSIRTP_Delta_large_a}
\includegraphics[width=2.0in]{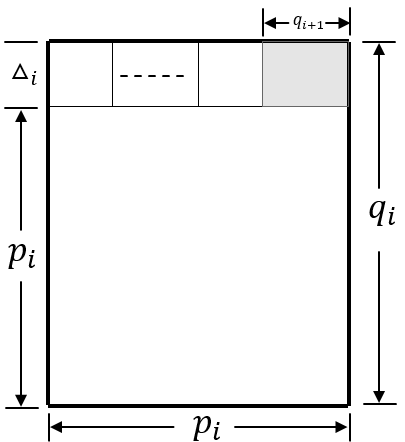}}
\hspace{0.75in} \subfigure[Rectangle partition on $q_i\times p_i$]{
\label{fig:ALGSIRTP_Delta_large_b}
\includegraphics[width=2.15in]{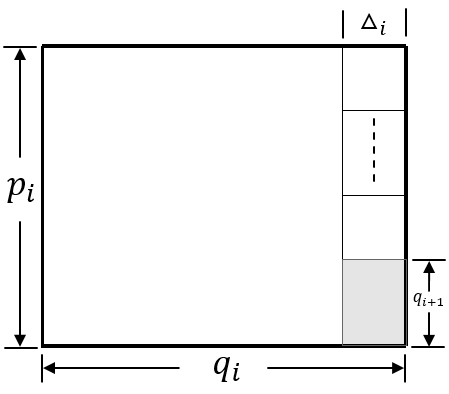}}
\caption{Illustration of one step of recursion for $\SIRTP(p_i,q_i)$
when $\Delta_i\geq p_i/4$.} \label{fig:ALGSIRTP_Delta_large}
\end{figure}

When $\Delta_i<p_i/4$, we use the square-transfer heuristic. We cut
off $\lfloor\sqrt{p_i/\Delta_i-1}\rfloor$ more identical rectangles
$s_i\times\Delta_i$ (resp. $\Delta_i\times s_i$) from $p_i\times
\Delta_i$ (resp. $\Delta_i\times p_i$), where
$s_i=\lfloor\sqrt{p_i/\Delta_i-1}\rfloor\cdot\Delta_i$. These
rectangles make up of the square $s_i\times s_i$, which can be cut
off as a transfer square from one of the larger squares $p_i\times
p_i$ (there must be one). Thus this induces the subproblem
$\SIRTP(p_{i+1},q_{i+1})$ where $p_{i+1}=\Delta_i$ and
$q_{i+1}=p_i-\lfloor\sqrt{p_i/\Delta_i-1}\rfloor^2\cdot\Delta_i$,
while $\lfloor
q_i/p_i\rfloor+2\cdot\lfloor\sqrt{p_i/\Delta_i-1}\rfloor+1$ many
rectangles are added. This is illustrated in Figure
\ref{fig:ALGSIRTP_Delta_small}, where $\lfloor q_i/p_i\rfloor$ is
also simplified to be $1$ and the shaded areas shows the subproblem
$\SIRTP(p_{i+1},q_{i+1})$.

\begin{figure}[htbp]
\centering \subfigure[Rectangle partition on $p_i\times q_i$]{
\label{fig:ALGSIRTP_Delta_small_a}
\includegraphics[width=2.0in]{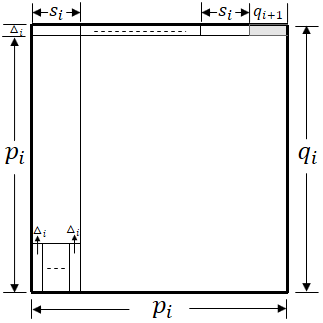}}
\hspace{0.75in} \subfigure[Rectangle partition on $q_i\times p_i$]{
\label{fig:ALGSIRTP_Delta_small_b}
\includegraphics[width=2.0in]{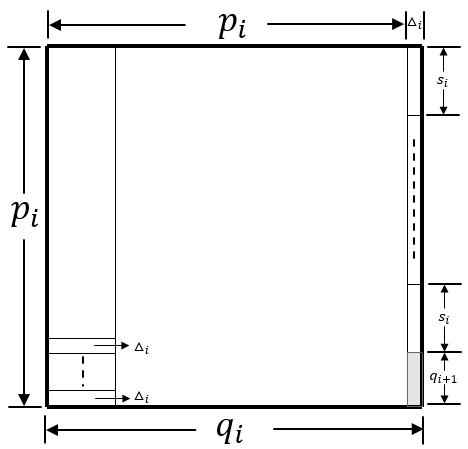}}
\caption{Illustration of one step of recursion for $\SIRTP(p_i,q_i)$
when $\Delta_i<p_i/4$.} \label{fig:ALGSIRTP_Delta_small}
\end{figure}

Then we only have to prove the upper bound in Theorem
\ref{thm:algorithm}. Concretely, we show that the output of
$\ALGSIRTP(p,q)$ is at most $\lfloor q/p\rfloor+8\sqrt{p}+10\log_2
p$ for $q\geq p$. To this end, we prove the following lemma.

\begin{lemma} \label{lem:recursive_upper_bound}
For each $0\leq i\leq\ell$, $\ALGSIRTP(p_i,q_i)\leq \lfloor
q_i/p_i\rfloor+8\sqrt{p_i}+\log_2 p_i+4(\ell-i)$ for $q_i\geq p_i$.
\end{lemma}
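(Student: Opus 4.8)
The plan is to prove Lemma~\ref{lem:recursive_upper_bound} by downward induction on $i$, starting from the base case $i=\ell$ and working back to $i=0$; the case $i=0$ then gives the desired bound $\lfloor q/p\rfloor+8\sqrt{p}+10\log_2 p$ for $\ALGSIRTP(p,q)$, since $\ell\le 2\lceil\log_2 p\rceil+1\le 3\log_2 p$ (absorbing the small additive constants into the $10\log_2 p$ slack), which is precisely the upper bound promised in Theorem~\ref{thm:algorithm}.

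\textbf{Base case.} For $i=\ell$ the recursion has bottomed out, meaning $p_\ell \mid q_\ell$, so $\ALGSIRTP(p_\ell,q_\ell)=q_\ell/p_\ell=\lfloor q_\ell/p_\ell\rfloor$, which is trivially at most $\lfloor q_\ell/p_\ell\rfloor+8\sqrt{p_\ell}+\log_2 p_\ell$. (If the algorithm can also terminate at a round where $\Delta_i\ge p_i/4$ forces a trivial subproblem, one checks that case reduces to the inductive step with the recursive term vanishing.)

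\textbf{Inductive step.} Assume the bound holds for $i+1$; we prove it for $i$. There are two cases according to the branch taken in round $i$. If $\Delta_i\ge p_i/4$, then by the correctness argument already established in the excerpt,
\[
\ALGSIRTP(p_i,q_i)=\left\lfloor\frac{q_i}{p_i}\right\rfloor+\left\lfloor\frac{p_i}{\Delta_i}\right\rfloor-1+\ALGSIRTP(p_{i+1},q_{i+1}),
\]
with $p_{i+1}=\Delta_i\ge p_i/4$. Since $\Delta_i\ge p_i/4$ we have $\lfloor p_i/\Delta_i\rfloor-1\le 3$, a constant, so the "overhead" charged this round is tiny and easily absorbed into the $4(\ell-i)$ term once we apply the inductive hypothesis to the recursive call (using $\lfloor q_{i+1}/p_{i+1}\rfloor \le p_{i+1}-1 < p_i$, but more carefully noting the recursion term telescopes correctly). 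If instead $\Delta_i<p_i/4$, then
\[
\ALGSIRTP(p_i,q_i)=\left\lfloor\frac{q_i}{p_i}\right\rfloor+2\left\lfloor\sqrt{\tfrac{p_i}{\Delta_i}-1}\right\rfloor+1+\ALGSIRTP(p_{i+1},q_{i+1}),
\]
and here the overhead $2\lfloor\sqrt{p_i/\Delta_i-1}\rfloor+1$ can be as large as $\approx 2\sqrt{p_i/\Delta_i}$, which is \emph{not} a constant. The crux of the argument is to charge this against the drop from $8\sqrt{p_i}$ to $8\sqrt{p_{i+1}}=8\sqrt{\Delta_i}$: by the inductive hypothesis applied to $(p_{i+1},q_{i+1})$,
\[
\ALGSIRTP(p_{i+1},q_{i+1})\le\left\lfloor\frac{q_{i+1}}{p_{i+1}}\right\rfloor+8\sqrt{\Delta_i}+\log_2\Delta_i+4(\ell-i-1),
\]
and one needs $2\sqrt{p_i/\Delta_i}+1 + 8\sqrt{\Delta_i}\le 8\sqrt{p_i}+4$, i.e. $\sqrt{p_i}-\sqrt{\Delta_i}\ge\frac14\sqrt{p_i/\Delta_i}$ (roughly). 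Writing $t=\sqrt{p_i/\Delta_i}>2$, this becomes $\sqrt{\Delta_i}(t-1)\ge \tfrac14 t$, i.e. $\sqrt{\Delta_i}\ge \tfrac{t}{4(t-1)}$, which holds since $\Delta_i\ge 1$ and $t/(4(t-1))<1$ for $t\ge 2$. One must also handle the $\lfloor q_{i+1}/p_{i+1}\rfloor$ and $\log_2$ terms: in the square-transfer branch $q_{i+1}=p_i-\lfloor\sqrt{p_i/\Delta_i-1}\rfloor^2\Delta_i$, which is between $\Delta_i$ and roughly $2\Delta_i\sqrt{p_i/\Delta_i}$ (using $p-\lfloor\sqrt{p}\rfloor^2\le 2\sqrt p$ type bounds as in Section~\ref{sec:square_transfer}), so $\lfloor q_{i+1}/p_{i+1}\rfloor=\lfloor q_{i+1}/\Delta_i\rfloor = O(\sqrt{p_i/\Delta_i})$ — this extra term is the delicate part and must be folded into the same $\sqrt{p_i}$ budget, and the $\log_2\Delta_i\le\log_2 p_i$ bound together with the $4(\ell-i-1)\to 4(\ell-i)$ step each leave one unit of slack.

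\textbf{Main obstacle.} The routine parts are the base case and the Euclidean branch. The real work is the square-transfer branch: one must verify the single inequality
\[
2\left\lfloor\sqrt{\tfrac{p_i}{\Delta_i}-1}\right\rfloor+1+\left\lfloor\frac{q_{i+1}}{\Delta_i}\right\rfloor+8\sqrt{\Delta_i}+\log_2\Delta_i+4(\ell-i-1)
\;\le\; 8\sqrt{p_i}+\log_2 p_i+4(\ell-i),
\]
which amounts to showing that the $\Theta(\sqrt{p_i/\Delta_i})$ cost incurred by building and slicing the transfer square, plus the $\lfloor q_{i+1}/\Delta_i\rfloor$ "leftover ratio" cost in the next round, is dominated by $8(\sqrt{p_i}-\sqrt{\Delta_i})$. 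I expect this to hinge on the branching threshold $\Delta_i < p_i/4$ (so that $\sqrt{p_i/\Delta_i}>2$ and the geometric-vs-square-root gap opens up) and on the elementary estimate $x-\lfloor\sqrt{x}\rfloor^2 \le 2\sqrt{x}-1$; getting the constant $8$ to be large enough to simultaneously cover both the transfer cost and the propagated leftover ratio is where the bookkeeping must be done carefully rather than sloppily.
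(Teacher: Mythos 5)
Your proposal follows the same route as the paper: downward induction on $i$, a two-case split on the branch taken, and the key observation that in the square-transfer branch the propagated ratio $\lfloor q_{i+1}/p_{i+1}\rfloor \le p_i/\Delta_i - \lfloor\sqrt{p_i/\Delta_i-1}\rfloor^2$ combines with the $2\lfloor\sqrt{p_i/\Delta_i-1}\rfloor$ transfer cost to give $\le 4\sqrt{p_i/\Delta_i-1}\le 4\sqrt{p_i}$, which (together with $8\sqrt{\Delta_i}<4\sqrt{p_i}$ from the threshold $\Delta_i<p_i/4$, and the $\log_2$ and $4(\ell-i)$ slack) is exactly the bookkeeping the paper carries out. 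The only thing you leave unfinished is this last combination step, and the trivial bound $\lfloor q_{i+1}/p_{i+1}\rfloor<p_i$ you mention in the Euclidean branch is far too weak — there the correct observation is $q_{i+1}/p_{i+1}=p_i/\Delta_i-\lfloor p_i/\Delta_i\rfloor+1\in[1,2)$, so the combined overhead $\lfloor p_i/\Delta_i\rfloor-1+\lfloor q_{i+1}/p_{i+1}\rfloor\le p_i/\Delta_i\le 4$ — but your remark about telescoping shows you see where this goes.
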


\begin{proof}
We prove this lemma by induction on $i$ from $\ell$ down to $0$.	
For the basis case that $p_\ell|q_\ell$,
$\ALGSIRTP(p_\ell,q_\ell)=q_\ell/p_\ell$. The lemma holds for
$i=\ell$. For the inductive step, assume that the lemma holds for
$i=k+1$. Then for $i=k$, we consider the two cases that
$\Delta_k\geq p_k/4$ and $\Delta_k<p_k/4$, respectively.

By the way of partitions, if $\Delta_k\geq p_k/4$, then
$q_{k+1}=p_k-(\lfloor p_k/\Delta_k\rfloor-1)\cdot\Delta_k$ and
$p_{k+1}=\Delta_k<p_k$. We have
\begin{eqnarray*}
&& \ALGSIRTP(p_k,q_k)\\
&\leq&
\left\lfloor\frac{q_k}{p_k}\right\rfloor+\left\lfloor\frac{p_k}{\Delta_k}\right\rfloor-1
+\ALGSIRTP\left(p_{k+1},q_{k+1}\right)\\
&\leq&
\left\lfloor\frac{q_k}{p_k}\right\rfloor+\left\lfloor\frac{p_k}{\Delta_k}\right\rfloor-1
+\left\lfloor\frac{q_{k+1}}{p_{k+1}}\right\rfloor+8\sqrt{p_{k+1}}+\log_2
p_{k+1}+4[\ell-(k+1)]\\
&\leq&
\left\lfloor\frac{q_k}{p_k}\right\rfloor+\left\lfloor\frac{p_k}{\Delta_k}\right\rfloor-1
+\frac{p_k}{\Delta_k}-\left(\left\lfloor\frac{p_k}{\Delta_k}\right\rfloor-1\right)+8\sqrt{p_{k+1}}+\log_2
p_{k+1}+4(\ell-k)-4\\
&<& \left\lfloor\frac{q_k}{p_k}\right\rfloor+8\sqrt{p_k}+\log_2
p_k+4(\ell-k)+\frac{p_k}{\Delta_k}-4\\
&\leq& \left\lfloor\frac{q_k}{p_k}\right\rfloor+8\sqrt{p_k}+\log_2
p_k+4(\ell-k).
\end{eqnarray*}
The lemma holds for $i=k$.

If $\Delta_k<p_k/4$, then
$q_{k+1}=p_k-\lfloor\sqrt{p_k/\Delta_k-1}\rfloor^2\cdot\Delta_k$ and
$p_{k+1}=\Delta_k$. We have
\begin{eqnarray*}
&& \ALGSIRTP(p_k,q_k)\\
&\leq& \left\lfloor
\frac{q_k}{p_k}\right\rfloor+2\cdot\left\lfloor\sqrt{\frac{p_k}{\Delta_k}-1}\right\rfloor+1
+\ALGSIRTP\left(p_{k+1},q_{k+1}\right)\\
&\leq& \left\lfloor
\frac{q_k}{p_k}\right\rfloor+2\cdot\left\lfloor\sqrt{\frac{p_k}{\Delta_k}-1}\right\rfloor+1
+\left\lfloor\frac{q_{k+1}}{p_{k+1}}\right\rfloor+8\sqrt{p_{k+1}}+\log_2
p_{k+1}+4[\ell-(k+1)]\\
&\leq& \left\lfloor
\frac{q_k}{p_k}\right\rfloor+2\cdot\left\lfloor\sqrt{\frac{p_k}{\Delta_k}-1}\right\rfloor+1
+\frac{p_k}{\Delta_k}-\left\lfloor\sqrt{\frac{p_k}{\Delta_k}-1}\right\rfloor^2+8\sqrt{p_{k+1}}+\log_2
p_{k+1}\\
&&+4[\ell-(k+1)]\\
&\leq& \left\lfloor
\frac{q_k}{p_k}\right\rfloor+2\cdot\sqrt{\frac{p_k}{\Delta_k}-1}+1
+\frac{p_k}{\Delta_k}-\left(\sqrt{\frac{p_k}{\Delta_k}-1}-1\right)^2+8\sqrt{\Delta_k}+\log_2
\Delta_k\\
&&+4[\ell-(k+1)]\\
&<& \left\lfloor
\frac{q_k}{p_k}\right\rfloor+4\cdot\sqrt{\frac{p_k}{\Delta_k}-1}+1+8\sqrt{\frac{p_k}{4}}+\log_2
\frac{p_k}{4}+4(\ell-k)-4\\
&<& \left\lfloor\frac{q_k}{p_k}\right\rfloor+8\sqrt{p_k}+\log_2
p_k+4(\ell-k).
\end{eqnarray*}
The lemma holds for $i=k$.

Combining these two cases, Lemma \ref{lem:recursive_upper_bound}
follows.
\end{proof}

Since $\ell\leq 2\lceil\log_2 p\rceil+1$, the upper bound is
obtained when $i=0$ in Lemma \ref{lem:recursive_upper_bound}.
Theorem \ref{thm:algorithm} has been proved.
\end{proof}

By the proof of Theorem \ref{thm:algorithm}, ALGSIRTP gives not only
a solution to $\SIRTP(p,q)$, but also a partition method on the two
rectangles by the choice in each round of recursions, and the
coordinates of partition lines are integers.

Recall that in Section \ref{sec:definition_problem}, we reduce
$\SIRTP(a,b,c,d)$ ($a\geq c\geq d\geq b$) to $\SIRTP(d,b,b,d)$.
Another reduction is to $\SIRTP(a,c,c,a)$ by multiplying $c/b$ to
the side of lengths $b$ and $d$. Since $a/d=c/b$, ALGSIRTP gives the
same solution to $\SIRTP(d,b,b,d)$ and $\SIRTP(a,c,c,a)$ at most
$d/b+O(\sqrt{b})$, and thus the same solution to $\SIRTP(a,b,c,d)$.
The latter one does not necessarily implies partitions with integer
coordinate lines.

For the general case for $\SRTP(p,q)$ that $p/q$ is rational,
because of the stretch technique, we can assume that $p$ and $q$ are
two rational numbers, denoted by $p=p_1/p_2$ and $q=q_1/q_2$ the
irreducible fractions, where $p_i,q_i\in\mathbb{Z}^+$ for
$i\in\{1,2\}$. We can also convert it to the integeral case by
multiplying the least common multiple $\lcm(p_2,q_2)$ to them. Then
assuming $q\geq p$, ALGSIRTP gives a solution at most
$q/p+O(\sqrt{p_1 q_2/g})$, where $g=\gcd(p_2,q_2)$ is the greatest
common divisor of $p_2$ and $q_2$. This is not a
$O(\sqrt{p})$-approximation solution to minimum $\SRTP(p,q)$ and
ALGSIRTP is not be a $\poly(\log p)$ time algorithm any more.
However, in this case, if we are given $p_1,q_1,p_2,q_2$ as input,
then ALGSIRTP halts in $\poly(\log p_2 q_1)$ time.

\section{No Constant Lower Bound}
\label{sec:constant_lower_bound}

In this section, we prove Theorem \ref{thm:lower_bound}. First, we
define the pattern of a rectangle partition and the equivalence of
pairs of isomorphic rectangle partitions. Second, we show that if
two equivalent pairs of isomorphic rectangle partitions solve
$\SIRTP(p,q)$ and $\SIRTP(p',q')$, respectively, then $p/q=p'/q'$.
This means that an equivalent class of isomorphic rectangle
partition pairs can only deal with the $\SIRTP(p,q)$ for a fixed
value of $p/q$. So a constant number of equivalent classes cannot
deal with too many $\SIRTP(p,q)$ instances.

\subsection{The Pattern of A Rectangle Partition}

Suppose that $\mathcal{P}$ is a rectangle partition of size $k$ on
rectangle $M$. An observation is that, except the four corners of
$M$, every the intersection among partition lines and $M$'s sides is
of shape ``$\perp$" or cross. By extending all the partition lines
to be face-to-face, all the ``$\perp$"-shaped intersections become
crosses, while some new cross intersections might emerge. Such
extension makes a grid partition on $M$ and we call it the
\emph{extension of $\mathcal{P}$}. Suppose that in this extension,
there are $r$ rectangles in each row and $c$ rectangles in each
column.

For each integer $1\leq i\leq k$, define matrix $M_i$ to be a
$0$-$1$ matrix of size $r\times c$, in which the $(u\times v)$-th
entry ($1\leq u\leq r$, $1\leq v\leq c$) is $1$ if and only if the
$(u\times v)$-th rectangle in the extension belongs to the $i$-th
module of $\mathcal{P}$. We define the pattern of $\mathcal{P}$ to
be the $k$-tuple $(M_1,M_2,\ldots,M_k)$. We say that two rectangle
partitions of size $k$ have the same pattern if they have exactly
the same $k$-tuple, including the size and the entries of each
$M_i$. Figure \ref{fig:pattern} illustrates the pattern of a
rectangle partition of size $6$, in which each $M_i$ for $1\leq
i\leq 6$ is a $0$-$1$ matrix of size $3\times 4$, and for example,
the second row of $M_3$ is $\begin{pmatrix} 1 & 1 & 1 & 0
\end{pmatrix}$ and other two rows are $\begin{pmatrix} 0 & 0 & 0 & 0
\end{pmatrix}$.

\begin{figure}[htbp]
\centering \subfigure[Original rectangle partition]{
\label{fig:pattern_a}
\includegraphics[width=2.0in]{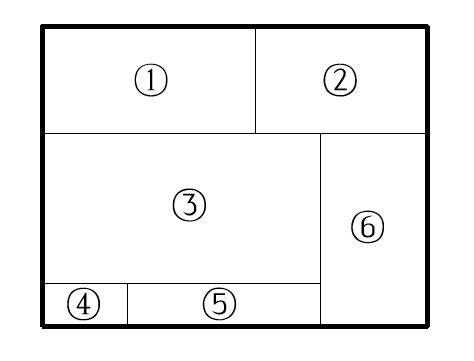}}
\hspace{0.75in} \subfigure[Extension of the original rectangle
partition]{ \label{fig:pattern_b}
\includegraphics[width=2.0in]{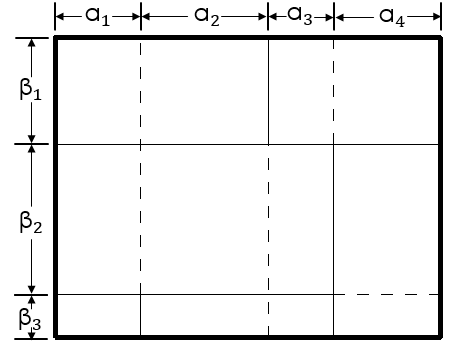}}
\caption{Illustration of the pattern of a rectangle partition.}
\label{fig:pattern}
\end{figure}

Note that for each $i$ the rank of $M_i$ is exactly one since each
module of $\mathcal{P}$ is a rectangle. Denote the non-zero row of
$M_i$ by $\vec{r}_i$ and the non-zero column of $M_i$ by
$\vec{c}_i$. Let $l_i\times h_i$ be the $i$-th module of
$\mathcal{P}$ and $\alpha_u\times\beta_v$ be the $(u\times v)$-th
rectangle in the extension. Let
$\vec{\alpha}_i=(\alpha_1,\alpha_2,\ldots,\alpha_c)$ and
$\vec{\beta}_i=(\beta_1,\beta_2,\ldots,\beta_r)$ be the two length
vectors for the two adjacent sides, respectively. Thus for each
$1\leq i\leq k$,
$$l_i=\langle\vec{\alpha},\vec{r}_i\rangle$$ and $$h_i=\langle\vec{\beta},\vec{c}_i\rangle.$$

\subsection{Proof of Theorem \ref{thm:lower_bound}}

Then we prove Theorem \ref{thm:lower_bound}.

\begin{proof}
For two pairs of isomorphic rectangle partitions
$(\mathcal{P}_1,\mathcal{P}_2)$ and
$(\mathcal{P}'_1,\mathcal{P}'_2)$, suppose that all four partitions
have size $k$. Because of isomorphism, there is a one-one map,
denoted by $\pi$ (resp. $\pi'$)$:[k]\rightarrow[k]$, to identify the
modules in $(\mathcal{P}_1$ and $\mathcal{P}_2)$ (resp. in
$(\mathcal{P}'_1$ and $\mathcal{P}'_2)$). We say that
$(\mathcal{P}_1,\mathcal{P}_2)$ and
$(\mathcal{P}'_1,\mathcal{P}'_2)$ are \emph{equivalent}, if
$\pi=\pi'$, $\mathcal{P}_1$, $\mathcal{P}'_1$ have the same pattern
and $\mathcal{P}_2$, $\mathcal{P}'_2$ have the same pattern also.
For simplification, we can assume that $\pi$ and $\pi'$ are both the
identity map, and each pair of $M_i$'s are equal accordingly.

\begin{lemma} \label{lem:same_ratio}
Suppose that $(\mathcal{P}_1,\mathcal{P}_2)$ and
$(\mathcal{P}'_1,\mathcal{P}'_2)$ are equivalent isomorphic
rectangle partition pairs for $\SIRTP(p,q)$ and $\SIRTP(p',q')$,
respectively ($q\geq p$, $q'\geq p'$). Then $p/q=p'/q'$.
\end{lemma}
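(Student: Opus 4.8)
The plan is to exploit the linear-algebraic description of the module side lengths in terms of the pattern, developed just above. Fix the common size $k$ and, as agreed, take $\pi=\pi'=\mathrm{id}$ so that for each $i$ the matrices $M_i$ arising from $\mathcal P_1$ and $\mathcal P_1'$ coincide (and likewise for $\mathcal P_2,\mathcal P_2'$); in particular the extension of $\mathcal P_1$ and of $\mathcal P_1'$ have the same number of rows $r_1$ and columns $c_1$, giving the same non-zero row/column indicator vectors $\vec r_i,\vec c_i$, and similarly $r_2,c_2,\vec r_i\,',\vec c_i\,'$ for the second partition. Let $\vec\alpha\in\mathbb R_{>0}^{c_1}$, $\vec\beta\in\mathbb R_{>0}^{r_1}$ be the side-length vectors of the extension of $\mathcal P_1$ on the $p\times q$ rectangle (so $p=\sum_v\alpha_v$ reading along the appropriate direction, $q=\sum_u\beta_u$), and let $\vec\alpha\,',\vec\beta\,'$ be those of $\mathcal P_1'$ on the $p'\times q'$ rectangle. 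By the identity at the end of the previous subsection, the $i$-th module of $\mathcal P_1$ is $l_i\times h_i$ with $l_i=\langle\vec\alpha,\vec r_i\rangle$, $h_i=\langle\vec\beta,\vec c_i\rangle$, and the $i$-th module of $\mathcal P_1'$ is $l_i'\times h_i'$ with $l_i'=\langle\vec\alpha\,',\vec r_i\rangle$, $h_i'=\langle\vec\beta\,',\vec c_i\rangle$ — the \emph{same} $\vec r_i,\vec c_i$ because the patterns agree.

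Next I bring in the isomorphism between $\mathcal P_1$ and $\mathcal P_2$. Since rotation is forbidden (SIRTP), the one-one map $\pi=\mathrm{id}$ identifies the $i$-th module of $\mathcal P_1$ with the $i$-th module of $\mathcal P_2$ \emph{with matching orientation}: the horizontal side of one equals the horizontal side of the other. Writing out the analogous formulas for $\mathcal P_2$ on the $q\times p$ rectangle with its own side-vectors $\vec\gamma,\vec\delta$ and indicators $\vec r_i\,',\vec c_i\,'$, the isomorphism says $\langle\vec\alpha,\vec r_i\rangle=\langle\vec\gamma,\vec r_i\,'\rangle$ and $\langle\vec\beta,\vec c_i\rangle=\langle\vec\delta,\vec c_i\,'\rangle$ for all $i$ (choosing the labelling of axes so the long side of $p\times q$ is parallel to the short side of $q\times p$, as in the reduction to $\SIRTP(p,q)$). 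The key point is that all of these relations — the $k$ equalities of widths, the $k$ equalities of heights, together with $\sum_v\alpha_v=p$, $\sum_u\beta_u=q$, the matching totals on the $\mathcal P_2$ side, and the area constraint — form a \emph{homogeneous} linear system in the concatenated vector of all extension side-lengths $(\vec\alpha,\vec\beta,\vec\gamma,\vec\delta)$ whose coefficient matrix depends \emph{only on the common pattern data} $\{\vec r_i,\vec c_i,\vec r_i\,',\vec c_i\,'\}$, not on $p,q$. Hence $(\vec\alpha,\vec\beta,\vec\gamma,\vec\delta)$ and $(\vec\alpha\,',\vec\beta\,',\vec\gamma\,',\vec\delta\,')$ lie in the same kernel. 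The normalization that fixes the overall scale is exactly $\sum_v\alpha_v=p$ versus $\sum_v\alpha_v'=p'$; if the solution space of the pattern constraints, after quotienting by this single scaling freedom, is one-dimensional, then the two solution vectors are proportional, and proportionality of $\vec\alpha$ forces $p'/p=q'/q$, i.e. $p/q=p'/q'$.

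So the crux, and the step I expect to be the main obstacle, is to show that \emph{the pattern constraints pin down the side-length vectors up to a single global scalar}. This is the content of "a fixed pattern can only realize one shape": intuitively, once the grid combinatorics and the module-identification combinatorics are fixed, all degrees of freedom in the edge lengths are consumed by the matching equations except for an overall homothety. I would prove it by the following argument. Start from any edge of the extension of $\mathcal P_1$ (say a segment of length $\alpha_1$) and any edge of $\mathcal P_2$; the chain of module-identifications plus the grid incidences lets one propagate a length assigned to one edge to a rational multiple on every other edge — concretely, walk along a row of the $\mathcal P_1$-extension through a module, cross over to $\mathcal P_2$ via the isomorphism, and come back, each step being an equality of inner products $\langle\vec\alpha,\vec r_i\rangle=\langle\vec\gamma,\vec r_i\,'\rangle$ that, combined with the known entries of the pattern, expresses one new coordinate in terms of previously determined ones. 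One must check that this propagation reaches every coordinate (connectivity of the combined incidence structure — true because both rectangles are connected and the isomorphism is a bijection of all $k$ modules) and is consistent (guaranteed because $(\vec\alpha',\dots)$ is an actual solution, so no contradiction can arise). Consistency plus full reach gives a one-dimensional solution space. Having set this up, fixing $\alpha_1=\alpha_1'$ would force $\vec\alpha=\vec\alpha'$, $\vec\beta=\vec\beta'$, and then $p/q=(\sum\alpha_v)/(\sum\beta_u)=(\sum\alpha_v')/(\sum\beta_u')=p'/q'$, completing the proof.

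Once Lemma \ref{lem:same_ratio} is in hand, Theorem \ref{thm:lower_bound} follows as sketched in Section \ref{sec:result_technique}: a solution of size $\le m$ uses one of only finitely many patterns for each of the two partitions, hence finitely many equivalence classes of pairs, each class — by the lemma — serving a single ratio $q/p$; but $\{q/p: p<q<(1+\varepsilon)p\}$ is infinite, so for $m$ fixed some ratio in that interval is unreachable, i.e. its minimum SIRTP solution exceeds $m$.
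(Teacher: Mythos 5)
Your setup (patterns, extension side-length vectors, inner-product formulas for module widths and heights) matches the paper's, but the crux of your argument rests on a claim that is false: namely, that the pattern constraints determine the concatenated side-length vector $(\vec\alpha,\vec\beta,\vec\gamma,\vec\delta)$ up to a single global scalar, so that the homogeneous solution space is one-dimensional. Here is a concrete counterexample. Take both $\mathcal{P}_1$ and $\mathcal{P}_2$ to be the $2\times 2$ grid partition of a $p\times p$ square (four modules, one per cell), with $\pi$ the identity map. The constraints supplied by the pattern and the matching totals are only $\alpha_j=\gamma_j$ ($j=1,2$), $\beta_i=\delta_i$ ($i=1,2$), and $\alpha_1+\alpha_2=\beta_1+\beta_2$; the solution cone is three-dimensional. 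For instance $(\alpha_1,\alpha_2,\beta_1,\beta_2)=(1,3,1,3)$ and $(2,2,1,3)$ are two non-proportional solutions sharing the same pattern and the same $\pi$, both solving $\SIRTP(4,4)$. Your propagation argument fails for exactly this reason: since each module occupies a single grid cell, the constraint from module $i$ only ties $\alpha_{j(i)}$ to $\gamma_{j(i)}$ and $\beta_{u(i)}$ to $\delta_{u(i)}$, and no relation connects distinct $\alpha$-coordinates to each other or to the $\beta$'s (apart from the single sum relation), so the incidence graph through which you try to propagate a length is disconnected; connectivity of the rectangles does not imply connectivity of this graph.

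The lemma is nonetheless true, but for the weaker reason that the ratio $p/q=(\sum_v\alpha_v)/(\sum_u\beta_u)$ is constant on this (possibly high-dimensional) solution cone. The paper establishes this directly by a substitution-and-area computation rather than a dimension count: from the isomorphism of $(\mathcal{P}_1',\mathcal{P}_2')$ one has $l_i'=\langle\vec{\alpha}^{(p')},r_i^{(1)}\rangle=\langle\vec{\alpha}^{(q')},r_i^{(2)}\rangle$; replacing $\vec{\alpha}^{(p)}$ by $\vec{\alpha}^{(p')}$ in the extension of $\mathcal{P}_1$ while keeping each $h_i$ (i.e.\ keeping $\vec{\beta}^{(q)}$) yields a rectangle partition of a $p'\times q$ rectangle with modules $l_i'\times h_i$, so $p'q=\sum_i l_i'h_i$; the analogous replacement in $\mathcal{P}_2$ gives $q'p=\sum_i l_i'h_i$; hence $p'q=q'p$. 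If you want to retain your linear-algebraic framing, you must prove invariance of the ratio on the whole solution cone, not one-dimensionality of that cone, which is generally false.
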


\begin{proof}
For $\mathcal{P}_1$, let $\l_i\times h_i$ be the $i$-th module. In
its extension, let $\vec{\alpha}^{(p)}$ be the length vector and
$r^{(1)}_i$ be the non-zero row of $M_i$ parallel to the horizontal
side $p$. Similarly, for $\mathcal{P}'_1$, we define $\l'_i\times
h'_i$, $\vec{\alpha}^{(p')}$, $r^{(1)}_i$ (note that $\mathcal{P}_1$
and $\mathcal{P}'_1$ have the same pattern, and so they have the
same $r^{(1)}_i$); for $\mathcal{P}_2$, we define $\l_i\times h_i$
(note that $\mathcal{P}_1$ and $\mathcal{P}_2$ are isomorphic, and
so the modules are the same), $\vec{\alpha}^{(q)}$, $r^{(2)}_i$; for
$\mathcal{P}'_2$, we define $\l'_i\times h'_i$,
$\vec{\alpha}^{(q')}$, $r^{(2)}_i$. Then by the definition of
pattern, for each $i\in[k]$,
$$l'_i=\langle\vec{\alpha}^{(p')},r^{(1)}_i\rangle=\langle\vec{\alpha}^{(q')},r^{(2)}_i\rangle,$$
Since $$p\cdot q=\sum\limits_{i=1}^k l_i\cdot
h_i=\sum\limits_{i=1}^k
\langle\vec{\alpha}^{(p)},r^{(1)}_i\rangle\cdot h_i,$$ replacing
each $\alpha^{(p)}_i$ by $\alpha^{(p')}_i$ while keeping each $h_i$
makes a rectangle partition of the same pattern on the new rectangle
$p'\times q$. That is,
\begin{equation} \label{eqn:p'_q}
p'\cdot q=\sum\limits_{i=1}^k
\langle\vec{\alpha}^{(p')},r^{(1)}_i\rangle\cdot
h_i=\sum\limits_{i=1}^k l'_i\cdot h_i.
\end{equation}

On the other hand, for the same reason, since
$$q\cdot p=\sum\limits_{i=1}^k l_i\cdot
h_i=\sum\limits_{i=1}^k
\langle\vec{\alpha}^{(q)},r^{(2)}_i\rangle\cdot h_i,$$ replacing
each $\alpha^{(q)}_i$ by $\alpha^{(q')}_i$ while keeping each $h_i$
makes a rectangle partition of the same pattern on the new rectangle
$q'\times p$. That is,
\begin{equation} \label{eqn:q'_p}
q'\cdot p=\sum\limits_{i=1}^k
\langle\vec{\alpha}^{(q')},r^{(2)}_i\rangle\cdot
h_i=\sum\limits_{i=1}^k l'_i\cdot h_i.
\end{equation}
Comparing Equations (\ref{eqn:p'_q}) with (\ref{eqn:q'_p}), we have
$$p'\cdot q=\sum\limits_{i=1}^k l'_i\cdot h_i=q'\cdot p.$$
Lemma \ref{lem:same_ratio} follows.
\end{proof}

By Lemma \ref{lem:same_ratio}, we know that equivalent isomorphic
rectangle partition pairs can only deal with $\SIRTP(p,q)$ for a
fixed ratio $p/q$. Note that for a rectangle partition $\mathcal{P}$
of size $k$, there are at most $k-1$ many face-to-face extensions of
partition lines in each direction. So the number of rectangles in
its extension is at most $k^2$. When $k$ is a constant, the number
of pattern of $\mathcal{P}$ is also a constant. Thus, the number,
denoted by $f(k)$, of equivalent pairs of isomorphic rectangle
partitions of size $k$ is also a constant. However, for any
constants $\varepsilon>0$ and $m\in\mathbb{Z}^+$, there are more
than $f(m)$ many values of $p/q$, which means more than $f(m)$ many
instances of $\SIRTP(p,q)$. By Lemma \ref{lem:same_ratio}, so many
instances cannot be all solved by all pairs of isomorphic rectangle
partitions of size $m$. This completes the proof of Theorem
\ref{thm:lower_bound}.
\end{proof}

\section{Proof of Theorem \ref{thm:irrational_lower_bound}}
\label{sec:irrational_lower_bound}

We define a sort of rectangle partition named slat rectangle
partition that will be used in this proof.

\begin{definition}
(Slat rectangle partition) A \emph{slat rectangle partition}
$\mathcal{P}$ is a rectangle partition on a rectangle $M$, such that
if two horizontal sides of two modules overlap, then they are
identical.
\end{definition}

In other words, when going along any vertical partition line, you
will reach the horizontal sides of $M$, without interruption by any
horizontal side of any module except the horizontal sides of $M$.
The slat rectangle partition, looks like wooden floor composed of
many slats.

\begin{lemma} \label{lem:slat partion}
If $\SRTP(p,q)$ has a solution \footnote{In this section, we resume
the original meaning of a solution to SRTP, that is, a pair of
isomorphic rectangle partitions for two rectangles, rather than the
partition size.}, then there is a solution such that one partition
in the solution is a slat rectangle partition.
\end{lemma}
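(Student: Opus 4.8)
The plan is to start from an arbitrary solution $(\mathcal{P}_1,\mathcal{P}_2)$ to $\SRTP(p,q)$ on the two rectangles $M_1=p\times q$ and $M_2=q\times p$, together with the one-one, dimension-preserving and rotation-free correspondence $\pi$ between the modules of $\mathcal{P}_1$ and those of $\mathcal{P}_2$, and to modify it so that the partition of $M_1$ becomes a slat partition while the partition of $M_2$ stays strictly isomorphic to it. The whole idea is the single operation: extend every vertical partition line of $\mathcal{P}_1$ all the way from the bottom side of $M_1$ to the top side, and then transport the resulting extra cuts to $M_2$ through $\pi$.

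Concretely, first I would let $X$ be the set of $x$-coordinates of all vertical edges occurring in $\mathcal{P}_1$ (together with the two $x$-coordinates of the vertical sides of $M_1$), and define $\mathcal{P}_1'$ as the common refinement of $\mathcal{P}_1$ with the family of full vertical lines at the abscissas in $X$. Since the interior of each module $R$ of $\mathcal{P}_1$ is edge-free, the only new edges inside $R$ are full-height vertical cuts, so $R$ is split into a row of abutting vertical sub-rectangles; hence $\mathcal{P}_1'$ is again a rectangle partition, and it refines $\mathcal{P}_1$. Moreover every vertical edge of $\mathcal{P}_1'$ lies on one of these full lines, so every maximal vertical segment of $\mathcal{P}_1'$ runs from the bottom side of $M_1$ to the top side. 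By the characterization stated right after the definition of slat partitions, $\mathcal{P}_1'$ is a slat rectangle partition.

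Next I would transport the cuts. If $\mathcal{P}_1'$ splits a module $R$ of $\mathcal{P}_1$ into vertical strips at distances $0=t_0<t_1<\cdots<t_m=w_R$ from the left edge of $R$, I cut the corresponding module $R'=\pi(R)$ of $\mathcal{P}_2$ — which has the same width $w_R$ and the same height $h_R$ as $R$, because $\pi$ uses no rotation — at the same distances $t_1,\dots,t_{m-1}$ from its own left edge; call $\mathcal{P}_2'$ the partition of $M_2$ so obtained. Inside each $R'$ the strips tile $R'$, and the $R'$ tile $M_2$, so $\mathcal{P}_2'$ is a rectangle partition of $M_2$; and sending the $j$-th strip of $R$ to the $j$-th strip of $R'$ — these have matching width $t_j-t_{j-1}$ and matching height $h_R$, and they are aligned the same way — is a strict isomorphism between $\mathcal{P}_1'$ and $\mathcal{P}_2'$. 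Hence $(\mathcal{P}_1',\mathcal{P}_2')$ is again a solution to $\SRTP(p,q)$, and one of its two partitions, namely $\mathcal{P}_1'$, is a slat partition (we do not claim $\mathcal{P}_2'$ is slat, and in general it is not, which is why the lemma only asks for one).

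I do not expect a substantial obstacle here; the care is entirely in the bookkeeping. The two points to nail down are: that extending all vertical edges to full height really does produce a genuine rectangle partition which is slat — both consequences of each module having an edge-free interior — and that the transported cuts leave $\mathcal{P}_2'$ well defined and strictly isomorphic to $\mathcal{P}_1'$. This last point is exactly where the rotation-free hypothesis of SRTP is essential: it guarantees that $R$ and $\pi(R)$ are congruent with horizontal and vertical directions matched, so a vertical cut of $R$ pulls back through $\pi$ to a vertical cut of $\pi(R)$. No step requires a nontrivial computation.
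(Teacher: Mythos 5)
Your proof is correct and takes essentially the same route as the paper: the paper also extends the vertical sides of all modules of $\mathcal{P}_1$ to full height to obtain a slat partition $\mathcal{P}_1'$, and then applies the corresponding refinement to $\mathcal{P}_2$ to keep strict isomorphism. You merely spell out the bookkeeping (the transport of cuts module-by-module via the rotation-free correspondence $\pi$) that the paper compresses into the sentence ``We apply the same refinement to $\mathcal{P}_2$.''
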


\begin{proof}
Suppose $\SRTP(p,q)$ has a solution $(\mathcal{P}_1,\mathcal{P}_2)$.
For each rectangle in $\mathcal{P}_1$, we extend its vertical sides.
Since some modules of $\mathcal{P}_1$ are cut into rectangles by the
extended lines,  we get a refinement of $\mathcal{P}_1$, denoted by
$\mathcal{P}'_1$. Obviously, $\mathcal{P}'_1$ is a slat rectangle
partition.

We apply the same refinement to $\mathcal{P}_2$ and get
$\mathcal{P}'_2$. Obviously, $(\mathcal{P}'_1, \mathcal{P}'_2)$ is
also a solution.
\end{proof}

We define the width multi-set of rectangle partition, to be the
multi-set of the widths of its rectangles, and define the width set
to be the set of all widths.

\begin{lemma} \label{lem:width set independt partion}
If $\SRTP(p,q)$ has a solution $(\mathcal{P}_1,\mathcal{P}_2)$, then
there is a solution  $(\mathcal{P}'_1, \mathcal{P}'_2)$ such that
the width set is linear independent over rational numbers. Moreover,
for any $i\in\{1,2\}$, if $\mathcal{P}_i$ is a slat rectangle
partition, then $\mathcal{P}'_i$ is a slat rectangle partition.
\end{lemma}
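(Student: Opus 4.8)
The plan is to obtain $(\mathcal{P}'_1,\mathcal{P}'_2)$ from $(\mathcal{P}_1,\mathcal{P}_2)$ purely by sliding the partition lines — and, where necessary, by first adding some partition lines — never altering which sub-rectangles are adjacent to which. Then the combinatorial incidence structure of each partition is untouched, and in particular the slat property of $\mathcal{P}_i$ (a statement only about that incidence structure) carries over to $\mathcal{P}'_i$, which settles the ``moreover'' clause automatically. Concretely, fix the combinatorial scheme of the pair of tilings and consider its realizations: a realization assigns positive reals to the elementary edge lengths appearing in the two tilings (the cell widths and heights of their grid extensions), subject to the homogeneous rational linear equations saying that matched modules have equal widths and equal heights, and that the height of the first rectangle equals the width of the second and vice versa. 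The realizations form a rational convex cone $\mathcal{C}$ (the intersection of a rational linear subspace $L$ with the open positive orthant); the given solution is a point $x^{*}\in\mathcal{C}$, and \emph{every} point of $\mathcal{C}$ is again a solution of the same scheme, hence slat wherever the original was. Each module width is a fixed $\{0,1\}$-coefficient linear functional of the realization; let $\lambda_1,\dots,\lambda_n$ be those realizing the distinct widths at $x^{*}$.

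The first, and principal, case is that $\lambda_1,\dots,\lambda_n$ are linearly independent over $\mathbb{Q}$ on $L$. Choose a $\mathbb{Q}$-basis $b_1,\dots,b_D$ of $L$ and real numbers $t_1,\dots,t_D$ that are linearly independent over $\mathbb{Q}$ and for which $x:=\sum_l t_l b_l$ lies in $\mathcal{C}$ — possible since $\mathcal{C}$ is a nonempty relatively open subset of $L$ whereas the $\mathbb{Q}$-dependent tuples $(t_l)$ form a null set. For this $x$, any rational relation $\sum_j c_j\lambda_j(x)=0$ rewrites as $\sum_l t_l\bigl(\sum_j c_j\lambda_j(b_l)\bigr)=0$, and the $\mathbb{Q}$-independence of the $t_l$ forces $\sum_j c_j\lambda_j(b_l)=0$ for every $l$, that is $\sum_j c_j\lambda_j\equiv 0$ on $L$, whence $c=0$. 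So the width set of the solution corresponding to $x$ is linearly independent over $\mathbb{Q}$. (If one wants to retain the very same $p,q$, replace $\mathcal{C}$ by the affine slice $\{P=p,\ Q=q\}$ and run the identical argument with ``affinely independent'' in place of ``linearly independent''.)

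It remains to treat the case that $\lambda_1,\dots,\lambda_n$ satisfy a nontrivial rational relation identically on $\mathcal{C}$ — a dependence forced by the matching between the two partitions rather than by the particular geometry. The plan here is to destroy the redundancy by a \emph{compatible refinement}: a forced relation means that, in every realization, some module width equals a sum of widths that already occur as widths of the cells into which the matched module of the other partition is cut; drawing the corresponding vertical partition line in both partitions replaces that module by thinner pieces of already-present widths and yields a new solution — still slat where it was, since only vertical lines were added — whose set of distinct widths is strictly smaller. Iterating removes all forced dependences, after which the previous paragraph applies. I expect this last step to be the crux: one must show that an arbitrary forced rational relation, with general integer coefficients rather than merely a relation of the shape $w=w'+w''$, can always be witnessed by such a refinement, and that the refinement process terminates — most cleanly, perhaps, by checking that after passing to the common refinement of the two tilings no forced dependence among the (now elementary) widths survives. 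Granting that reduction, the perturbation argument above, together with the observation that sliding and adding partition lines never disturbs the slat property, completes the proof.
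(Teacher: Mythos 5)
Your proposal takes a genuinely different route from the paper's: you try to realize the linear-independence by choosing a ``generic'' point in the rational polyhedral cone of realizations of the fixed combinatorial scheme, whereas the paper works directly on the given realization by a constructive ``mincing'' induction. Your Case~1 (the perturbation) is an attractive idea, but as written it produces a solution to some $\SRTP(p',q')$, not to the original $\SRTP(p,q)$: perturbing inside $L$ moves $P(x)$ and $Q(x)$, and Theorem~\ref{thm:irrational_lower_bound} needs the new solution to live on the \emph{same} rectangles, otherwise the final contradiction $sp=t_1q$ loses its force. Your parenthetical fix (pass to the affine slice $\{P=p,Q=q\}$) is the right instinct, but ``affinely independent'' is not the right notion, and the argument needs to be redone there: on the slice $x^*+L_0$ with $L_0=L\cap\ker P\cap\ker Q$, a rational relation $\sum_jc_j\lambda_j(x^*+v)=0$ splits into a part on $L_0$ (rational and controllable by $\mathbb{Q}$-independent coordinates) and the constant $\sum_jc_j\lambda_j(x^*)$, and the only case you cannot escape by perturbation is precisely when $\sum_jc_j\lambda_j$ vanishes identically on the whole slice — your Case~2.

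That Case~2 is the genuine gap, and you acknowledge it yourself. Your proposed fix is not convincing: ``the common refinement of the two tilings'' is not a well-defined object here, since $\mathcal{P}_1$ and $\mathcal{P}_2$ live on rectangles of different widths and there is no shared coordinate system in which to take a refinement; and it is far from clear that an arbitrary forced integer relation $\sum_jc_j\lambda_j\equiv 0$ can always be ``witnessed'' by inserting vertical lines that only create already-present widths, nor that such a process terminates. In fact, the paper's proof is aimed squarely at this difficulty and handles it (and, uniformly, all accidental dependences as well) without any genericity argument: it writes a dependence with positive integer coefficients as $b_1y_1+\cdots+b_sy_s=c_1z_1+\cdots+c_tz_t$, mincing each width into $b_1$ equal vertical strips, and then runs a careful bin-packing-style redistribution of the minced $y'$-pieces into the $z'$-bins, controlling the waste to $y_1/N$ per bin, so that the resulting width set has one fewer element and unchanged rational rank; iterating down to the rank finishes the induction. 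Since your perturbation trick ultimately reduces to this same combinatorial bottleneck, I'd recommend either adopting the paper's mincing-and-packing induction outright for Case~2, or at least proving concretely (i) that every forced relation on the affine slice can be reduced to the shape $y_1+\sum b_iy'_i=\sum c_jz'_j$ by mincing, and (ii) a termination measure (the paper uses: width-set cardinality strictly decreases while rank is preserved). Until that is supplied, the argument has a real hole.
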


\begin{proof}
The general intuition of the proof is simple. Whenever there is a
width in the width set that is a linear combination other widths, we
cut all the rectangles of this width vertically into smaller
rectangles according to the combination. While in the proof, we have
to handle the negative coefficients in the linear combinations very
carefully, since all widths are positive, and we can only cut it
into positive segments.

Suppose that the width set of $\mathcal{P}_1$ and $\mathcal{P}_2$ is
$\{x_1,x_2, \ldots, x_n\}$. We can cut all rectangles in
$\mathcal{P}_1$ and $\mathcal{P}_2$ of width $x_i$ into $c$
rectangles of width $x_1/c$. Obviously, this operation, named
mincing operation, changes a slat partition into another one, and
keeps the isomorphism between two partitions.

Suppose that the rank of $\{x_1,x_2, \ldots, x_n\}$ is $r$. The
proof is an induction on $n$. Whenever $n>r$, we construct a new
solution, whose width set has size $n-1$ and rank $r$.

Because $n>r$, the width set is linearly dependent. Notice all
widths are positive numbers. There must be a linear equation of the
form $b_1 y_1+ b_2 y_2 + \cdots + b_s y_s= c_1 z_1 + c_2 z_2 +
\cdots + c_t z_t$, where the coefficients are positive integers, and
$y_1, y_2, \ldots, y_s, z_1, z_2, \ldots, z_t$ are distinct elements
from the width set, and $s, t$ are two positive integers satisfying
$s+t \geq 2$.

We keep $y_1$ unchanged and mince each of other widths $y_2, \ldots,
y_s, z_1, z_2, \ldots, z_t$ into $b_1$ pieces. Because $b_1 y_1+ b_2
b_1 (\frac{1}{b_1}y_2) + \cdots + b_s b_1 (\frac{1}{b_1} y_s)= c_1
b_1 (\frac{1}{b_1} z_1) + c_2 b_1 (\frac{1}{b_1} z_2) + \cdots + c_t
b_1 (\frac{1}{b_1} z_t)$, the new width set satisfies $y_1+ b_2 y'_2
+ \cdots + b_s y'_s= c_1 z'_1 + c_2 z'_2 + \cdots + c_t z'_t$. The
new width set has size $s$ and still rank $r$, since it is linearly
equivalent to the original set.

The purpose of the next step is to mince $y'_2, \ldots, y'_s$ tiny
enough such that they can be put into the bins on the right side of
the equation. There are $c_1$ bins of size $z'_1$, $c_2$ bins of
size $z'_2$ and so on. There are $b_2$ commodities of size $y'_2$,
$b_3$ commodities of size $y'_2$ and so on. The total room of these
bins $c_1 z'_1 + c_2 z'_2 + \cdots + c_t z'_t$ is $y_1$ larger than
total size of the commodities $b_2 y'_2 + \cdots + b_s y'_s$. The
requirement is that each bin in the same size category, leaves the
same room unoccupied.

Let $N$ denote $c_1+c_2+\cdots + c_t$. We always mince the
commodities into pieces of sizes no more than $\frac{y_1}{N}$. If we
have fully utilized a bin such that no piece can be put into it,
then the waste room is no more than $\frac{y_1}{N}$. In this way, we
never waste too much, and the total room of left bins and the room
of unsealed bins, is always larger the size of unpacked commodities.

We mince $y'_2$ into $T c_1$ pieces, where $T$ is larger enough such
that each piece has size no more than $\frac{y_1}{N}$. We put these
pieces into the $c_1$ bins of size $z'_1$ fairly. There are two
cases. If all pieces are put into the $c_1$ bins, we go on to mince
$y'_3$ and put its pieces into bins in the same way, starting from
these unsealed $c_1$ bins.

If there are pieces left, then each bin of size $z'_1$ has a room of
size $z''_2$ unoccupied. We seal these $c_1$ bins, the unoccupied
room of each bin is  no more than $\frac{y_1}{N}$, and it is
unoccupied forever. We go on to put these pieces into the $c_2$ bins
of size $z'_2$. Now, there is a problem, if the number of left
pieces is not a multiple of $c_2$, then we are not able to be fair
to all the $c_2$ bins, and not sure to fulfill the requirement. The
trick is simply mincing each pieces into $c_2$ subpieces, no matter
the packed ones or the unpacked ones. The current size of each
subpieces is $\frac{y'_2}{Tc_1c_2}$

Again, there are two cases, either we pack all subpieces of $y'_2$,
or we seal these $c_2$ bins and begin to pack them into the $c_3$
bins of size $z'_3$. In the second case, we will mince each
subpieces into $c_3$ subsubpieces.

We repeat this process, until all commodities are packed up. Suppose
each $y'_i$ is finally minced into pieces of size $y''_i$,
$i=2,3,\ldots, s$. Suppose the left room of each bin of size $z'_i$
is $z''_i$, $i=1,2,\ldots, t$. We show that $\{y''_2, \ldots, y''_s,
z''_1, \ldots, z''_t\}$ can linearly express $\{y_1, y'_2, \ldots,
y'_s, z'_1, \ldots, z'_t\}$. Each $y'_i$ is a multiple of $y''_i$.
Each $z'_i$ is a linear combination of $z''_i$ and the sizes of
pieces in the corresponding bin. And $y_1$ is a linear combination
of $z''_i$, since it is the size of the total unoccupied room.
Together the rest elements in the width set, we get a smaller new
width set.

The new set has the same rank. We repeat this process, until the
size of the width set becomes $r$.

\end{proof}

Then we turn to prove Theorem \ref{thm:irrational_lower_bound}.

\begin{proof}
Assume that there is a finite solution to $\SRTP(p,q)$.  By Lemma
\ref{lem:slat partion} and \ref{lem:width set independt partion},
there is a solution  $(\mathcal{P}_1,\mathcal{P}_2)$,  such that
$\mathcal{P}_1$ is a slat rectangle partition and the width set
$\{x_1,x_2, \ldots, x_n\}$ is linearly independent over rational
numbers.

Consider the partition $\mathcal{P}_1$. The total height of all
width $x_1$ modules is $sp$, where $s$ is an integer. Then consider
the partition $\mathcal{P}_2$. Suppose that a horizontal line goes
face-to-face. It also goes though some modules, and it is cut into
segments by these modules. The length of each segments is equal to
the width of the rectangle being cut. The total length $p$ is an
integer coefficient linear combination $t_1 x_1+\cdots t_n x_n$ of
$\{x_1,x_2, \ldots, x_n\}$. Because the width set is linearly
independent over rational numbers, the combination realizing $p$ is
unique.

When we move this horizontal line, we always get the same
combination. If we scan the whole rectangle, whose height is $q$,
from top to bottom, we find out that the total height of all width
$x_1$ modules rectangles in partition $\mathcal{P}_2$ is $t_1 q$.
Since $\mathcal{P}_1$ is isomorphic to $\mathcal{P}_2$, $s p = t_1
q$, which contradicts to the fact that $p/q$ is irrational. Theorem
\ref{thm:irrational_lower_bound} has been proved.
\end{proof}

\section{Conclusions and Future Discussions}

In this paper, we proposed the rectangle transformation problem
(RTP), and defined its strict version (SRTP), integral version
(IRTP) and their combination (SIRTP). We showed that $\SRTP(p,q)$
has no finite solution if the ratio of two side lengths $p/q$ is
irrational. So we focused on the complemental case for $\SRTP(p,q)$
that $p$ and $q$ are integers. We gave an algorithm for $\SRTP(p,q)$
which gave a solution at most $q/p+O(\sqrt{p})$ (assume $q\geq p$),
and showed that there is not any constant solution $k$ independent
of $p$ and $q$ for $\SRTP(p,q)$ even if the ratio $q/p$ is in any
constant range.

As a new problem, RTP and its variants leave a lot of open
questions. We list some representative ones for the research along
this line and discuss possible approaches to some of them.

\begin{itemize}

\item[(1)] For $\SIRTP(p,q)$, is there a polynomial time algorithm that gives a
better solution than ALGSIRTP, for example, $q/p+p^\varepsilon$ for
some $\varepsilon<1/2$ or even $q/p+O(\log p)$? We conjecture that
it is possible since the recursive step in ALGSIRTP seems to have a
large development space. By the proof of Theorem
\ref{thm:algorithm}, keeping the total number of recursions within
$O(\log p)$, in each recursive step, additional $p^\varepsilon$ many
rectangles will lead to an improvement to $q/p+p^\varepsilon$ and
additional constant many rectangles will lead to an improvement to
$q/p+O(\log p)$.

\item[(2)] The lower bound for $\SIRTP(p,q)$
seems to be able to be dependent on $p$ and $q$. A straight method
is to improve the proof of Theorem \ref{thm:lower_bound} by finding
an explicit relationship between the rectangle partition size and
the number of patterns. Another question is whether this idea can be
used to IRTP. It is clear that the pattern for IRTP can be defined
similarly, but adjustments to side lengths on a pattern become more
complicated.

\item[(3)] Whether there is a good algorithm for minimum $\IRTP(p,q)$? Here,
``good" means a polynomial time algorithm that have a good
approximation, or even gives the optimal solution. It is a possible
approach to give a rule for the choice of the Euclidean algorithm in
each recursive step as we discussed in Section
\ref{sec:result_technique}.

\item[(4)] For the negative direction of Question (3), what is
the hardness of the deterministic version of minimum $\IRTP(p,q)$
and minimum $\SIRTP(p,q)$, that is, whether there is a solution at
most $k$? If we restrict the partition lines to be of integer
coordinate, a straight method to find the optimal solution is to
enumerate the patterns for all pairs of isomorphic rectangle
partitions along the integer coordinates. Even with this
restriction, the running time of this method is just upper bounded
by $\exp\{O(p,q)\}$. Guessing the patterns and the one-one map for
the isomorphic rectangle partitions implies that the deterministic
versions of minimum IRTP and minimum SIRTP with integral solution
restriction are just both in NEXP. Without this restriction, we do
not even know how to find the optimal solution regardless of running
time. If Question (3) have a positive answer for the optimal
solution, it will be amazing and quite interesting. Note that even
though one of minimum IRTP and minimum SIRTP is easy, it cannot be
implied that the other one is also easy. These two problems are very
different.

\item[(5)] A more practical version of RTP is to relax the target
rectangle to be of area $(1+\delta)S$, where $S$ is the area of the
source rectangle. Then minimum RTP requires to minimize the module
number in the rectangle partition of the source one such that all
modules can be covered by the target one with all sides parallel to
a boundary. This is a mixed scenario about partitioning and
$2$-dimensional bin packing. The latter and its generalization to
$3$-dimensional case have been studied widely for a long time. The
techniques raised there might be helpful. See \cite{LMV2002,P2010}
for a survey.

\item[(6)] RTP and all of its variants can be generalized to
$3$-dimensional version. The two lower bounds in this paper is also
true for $3$-dimensional case, but the algorithm ALGSIRTP cannot be
generalized directly. It is worthwhile to have a systematic study.

\end{itemize}

\end{document}